\newtheorem{theorem} {Theorem}
\newtheorem{lemma}[theorem] {Lemma}
\newtheorem{claim}[theorem] {Claim}
\newcommand{\CH}{{\ensuremath{{C}}}\xspace}
\newcommand{\VRF}{{\ensuremath{{V}}}\xspace}
\newcommand{\Cp}{{\ensuremath{\mathit{WP_{\CH}}}}\xspace}
\newcommand{\Cw}{{\ensuremath{\mathit{MP_{W}}}}\xspace}
\newcommand{\Ct}{{\ensuremath{\mathit{WC_{T}}}}\xspace}
\newcommand{\Cv}{{\ensuremath{\mathit{MC_{\VRF}}}}\xspace}
\newcommand{\SM}{{\ensuremath{\mathit{MC_{A}}}}\xspace}
\newcommand{\SW}{{\ensuremath{\mathit{WB_{A}}}}\xspace}
\newcommand{\Bc}{{\ensuremath{\mathit{MB_{R}}}}\xspace}
\newcommand{\gameone}{{\ensuremath{1\!\!:\!\!1}}\xspace}
\newcommand{\gametwo}{{\ensuremath{1\!\!:\!\!1^n}}\xspace}
\newcommand{\gametwos}{{\ensuremath{1:1^n}}\xspace}
\newcommand{\gamethree}{{\ensuremath{0\!\!:\!\!n}}\xspace}
\newcommand{\gamethrees}{{\ensuremath{0:n}}\xspace}
\newcommand{\gamefour}{{\ensuremath{1\!\!:\!\!n}}\xspace}
\newcommand{\modelone}{{\ensuremath{{R}_{\rm m}}}\xspace}
\newcommand{\modeltwo}{{\ensuremath{{R}_{\rm a}}}\xspace}
\newcommand{\modelthree}{{\ensuremath{{R}_{\emptyset}}}\xspace}
\newcommand{\parhead}[1]{\noindent{\textbf{#1.}\xspace}}
\newcommand{\cgaa}[1]{#1}
\newcommand{\mmaa}[1]{#1}
\newcommand{\cga}[1]{#1}   
\newcommand{\mma}[1]{#1}   
\newcommand{\afa}[1]{#1} 
\date{}
\title{Algorithmic Mechanisms for Reliable Internet-based Computing under Collusion
\thanks{This work is supported in part by the Cyprus Research Promotion Foundation
grant T$\Pi$E/$\Pi\Lambda$HPO/0609(BE)/05, Comunidad de Madrid grant S2009TIC-1692, 
Spanish MICINN grant TIN2008--06735-C02-01,
and the National Science Foundation (CCF 1114930, CCF 0937829).
A preliminary version of this work appears in the Proceedings of NCA 2008, 
pages 315--324.}}
\author{
Antonio~Fern\'andez~Anta\thanks{
Institute IMDEA Networks, Madrid, Spain.
Email: \href{mailto:antonio.fernandez@imdea.org}{antonio.fernandez@imdea.org}.} 
\and
Chryssis~Georgiou\thanks{
Dept. of Computer Science, Univ. of Cyprus, Cyprus.
Email: \href{mailto:chryssis@cs.ucy.ac.cy}{chryssis@cs.ucy.ac.cy}.} 
\and
Miguel~A.~Mosteiro\thanks{
Dept. of Computer Science, Kean University, USA \& GSyC, Univ. Rey Juan Carlos, Spain.
Email: \href{mailto:mmosteir@kean.edu}{mmosteir@kean.edu}.}
}
\begin{document}

\maketitle

\begin{abstract}
In this work, using a game-theoretic approach, cost-sensitive mechanisms that lead to reliable
Internet-based computing are designed.
In particular, we consider Internet-based master-worker computations, where a
master processor assigns, across the Internet, a computational task to a
set of potentially untrusted worker processors and collects their responses.
Workers may collude in order to increase their benefit.

Several game-theoretic models that
capture the nature of the problem are analyzed, and algorithmic mechanisms that, for
each given set of cost and system parameters, achieve high
reliability are designed. 
Additionally, two specific realistic system scenarios are studied.
These scenarios are a system of volunteer computing like SETI, and  a company that buys computing cycles from Internet
computers and sells them to its customers in the form of a task-computation service.
Notably, under certain conditions, non redundant allocation yields the best trade-off between cost and reliability.\bigskip

{\bf Keywords:
Internet-based computing; 
algorithmic mechanism design; 
master-worker computing;
collusion.
}
\end{abstract}

\section{Introduction}
\label{sec:Intro}
%
\paragraph{Motivation.}
As traditional one-processor machines have limited computational
resources, and powerful parallel machines are very expensive to obtain and
maintain, the Internet is emerging as 
a viable computational platform 
for processing complex computational jobs. Several
Internet-oriented systems and protocols have been designed to operate
on top of this global computation infrastructure; examples include
Grid systems~\cite{EGEE,TeraGrid}, 
the ``@home'' projects~\cite{boinc}, such as SETI~\cite{SETI} (a classical example of \emph{volunteer computing}),  
and peer-to-peer computing--P2PC~\cite{P2PC,CCS}. 
Although the potential is great, the use of Internet-based computing 
is limited by the untrustworthiness nature of the
platform's components~\cite{UDC,boinc}.
Let us take SETI as an example. In SETI, data is distributed for
processing to millions of voluntary machines around the world. At a
conceptual level, in SETI there is a machine, call it the
\emph{master}, that sends jobs, across the Internet, to these computers,
call them the \emph{workers}. These workers execute and report back the result
of the task computation. However, these workers are not trustworthy,
and hence might report incorrect results. In SETI, the master attempts to
minimize the impact of these bogus results by assigning the same task
to several workers and comparing their outcomes (that is, {\em
redundant} task allocation is employed~\cite{boinc}), 
but there are also other methods~\cite{KCWB02,Duetal04,W06}.

In this paper, Internet-based master-worker computations \mma{are studied} from a game-theoretic point
of view. Specifically, these computations \mma{are modeled} as games where each worker chooses
whether to be {\em honest} (that is, compute and return the correct task result) or a
{\em cheater} (that is, fabricate a bogus result and return it to the master).
\mma{Additionally,} cost-sensitive mechanisms (algorithms) 
that provide the necessary incentive for the workers to truthfully compute
and report the correct result \mma{are designed}. The objective is to maximize the
probability of the master of obtaining the correct result while
minimizing its cost (or alternatively, increasing its benefit). 
\mma{In particular,} we identify and propose mechanisms for two paradigmatic applications.
Namely, a computing system as the aforementioned SETI where computing processors volunteer to donate part of their processing time,
and a second scenario where a company distributes computing tasks among contractor processors that get an economic reward in exchange.

Although the presentation is self-contained, it is assumed familiarity with basic concepts in Game Theory. For further details please refer to~\cite{osborne}.

\paragraph{Background and Prior/Related Work.}
Prior examples of Game Theory
in distributed computing include work on Internet
routing~\cite{KP99, RT02, MS07}, resource/facility location and
sharing~\textcolor{black}{\cite{Hall04, Fotakis06, HR_STOC08}}, containment of viruses
spreading~\cite{Roger06}, secret sharing~\cite{Halp04, Halp06}, and
task computations~\cite{CCS}. For
more discussion on the connection between Game Theory and computing we
refer the reader to the survey by Halpern~\cite{Survey07} and the book
by Nisan et al.~\cite{Nisan07}.

In traditional distributed computing, the behavior of the system
components (i.e., processors) is characterized a priori as either good
or bad, depending on whether they follow the prescribed protocol or
not.  In Game Theory, processors are assumed to act on their
own {\em self-interest} and they do not have an a priori established behavior.
Such processors are usually referred as {\em
rational}~\cite{UDC,Halp06}.  
In other words, the processors decide on how to act in
an attempt to increase their own benefit, or alternatively to lower
their own cost.  

In {\em Algorithmic Mechanism Design}~\cite{Nisan01, papa05, Halp06, SIGACT18}, 
games are designed to provide the necessary
incentives so that processors' interests are best served by
acting ``correctly.'' The usual practice is to provide some reward
(resp. penalty) should the processors (resp. do not) behave
as desired. 

\textcolor{black}{In~\cite{S_FGCS,SRDS, ALEX}} reliable master-worker computations have been
considered by redundant task-allocation.
In these works probabilistic guarantees of obtaining the correct result while minimizing the cost (number of workers chosen to perform the task \textcolor{black}{or amount of redundant allocation}) are also shown.
However, a traditional distributed computing approach is used, in which the behavior of each worker is predefined.
In this paper, much richer payoff parameters are studied and the behavior of each worker is not predefined, introducing new challenges that naturally drive to a game-theoretic approach.

Two other related works~\cite{CSAMSDAgridCollusion,STgridCollusion} where the worker behavior is predefined consider collusion in desktop grid computing. In both proposals, the goal is to identify colluders by means of a statistical analysis that requires the processors to compute multiple times. In the present paper, we study the more challenging problem of dealing with collusion when each processor computes only once.
 
\textcolor{black}{Previous work directly related to the present paper is included in~\cite{CCS} and~\cite{FGM:mechdesign},
where master-worker computations in a game-theoretic model are also studied.}
In~\cite{CCS}, the master can audit the results returned by rational workers with a tunable probability. Bounds for that audit probability are computed to guarantee that workers have incentives to be honest in three scenarios: redundant allocation with and without collusion\footnote{Cooperation among various workers concealed from the master.}, and single-worker allocation. They conclude that, in their model, single-worker allocation is a cost-effective mechanism specially in presence of collusion. 
In the present paper, a general study of how to carry out the computation depending on the system parameters (reward model, payoffs, bounds on utility or probability of incorrect result, etc.) is given (see Tables~\ref{table:seti} and~\ref{table:contractor}). 
For some cases studied here, the conclusion that redundant allocation does not help is also extracted. 
(E.g., in a scenario where only the number of workers is a choice and the result must be correct with probability $1$ for any payoff values.)
However, this may not be always the case.
For scenarios where the system parameters yield redundant allocation as the best approach, the analysis provided may be used by the master to choose games and reward models conveniently. Additionally, our work complements that work in various ways, such as studying more games,  including a richer payoff model, or considering probabilistic cheating. Finally, useful trade-offs between the benefit of the master and the probability of accepting an incorrect result are shown for the one-round protocol we propose.
\textcolor{black}{After the conference version of the present paper, in~\cite{FGM:mechdesign} and~\cite{CFGM:NCA11}, we extended the potential worker personalities with malicious and altruistic behaviors, but without considering collusion. Under this model, we explore only a subset of the games studied here.}

Distributed computation in presence of selfishness was also studied within the scope of \emph{Combinatorial Agencies} in Economics~\textcolor{black}{\cite{BFN06,Babaioff06,ES09,BFN_SAGT09}. 
The basic model considered is a combinatorial variant of the classical 
principal-agent problem~\cite{MWG95}: A master (principal) must
motivate a collection of workers (agents) to exert costly effort on
the master's behalf, but the workers' actions are hidden from the
master. Instead of focusing on each worker actions, the focus is
on complex combinations of the efforts of the workers that influence
the outcome. In~\cite{BFN06}, where the problem was first introduced,
the goal was to study how the utility of the master is affected if the equilibria 
space is limited to \emph{pure} strategies. I.e., equilibria computation where it is assumed that the players do not randomize their choice and, instead, deterministically choose among one of the available strategies. To that extent, the computation of a few Boolean functions is evaluated. In~\cite{Babaioff06} mixed strategies were considered: if the parameters of the problem yield multiple mixed equilibrium points, it is assumed that workers accept one ``suggested'' by the master. This is contrasted with our work as we require the master
to enforce a single equilibrium point (referred as {\em strong implementation} in~\cite{BFN06}). 
The work in~\cite{ES09} investigates the effect of auditing by allowing
the master to audit some workers (by random sampling) and verify their work. In our work,
the master decides probabilistically whether to verify all workers or none.  
In general, the spirit of the framework considered in Combinatorial agency is similar to the one we consider in the present work in the
sense that there is a master wishing a specific outcome and it must provide necessary
incentives to rational workers so to reach that outcome (exerting effort can be
considered as the worker performing the task, and not, as the worker not performing
the task and reporting a bogus result). However, there are several
differences. The main difference is that in our framework, the worker actions cannot
really be viewed as {\em hidden}. The master receives a response by each worker and it is
aware that either the worker has truthfully performed the task or not. The outcome
is affected by each worker's action in the case that no verification is performed 
(in a similar fashion as the majority boolean ``technology'' in Combinatorial agency) 
but via verification the master can determine the exact strategy used by each worker
and apply a specific reward/punishment scheme. In the framework considered
in combinatorial agency, the master witnesses the outcome of the computation, but it has
no knowledge of the possible actions that the worker might take. 
For this purpose, the master needs to devise contracts
for each worker based on the observed outcome of the computation and not on each
worker's possible action (as in our framework). Another important difference
includes the fact that our scheme considers worker punishment, as opposed to
the schemes in combinatorial agency where workers cannot be fined (limited liability
constraint); this is possible in our framework as workers' actions are ``contractible''
(either it performs a task or not).     
}

\textcolor{black}{Monderer and Tennenholtz~\cite{MT_EC03} consider a master-worker
framework where the master wishes to influence the behavior of rational workers
in a game, which is not under the master's control. Namely, the master cannot design
a new game, cannot enforce worker's behavior, cannot penalize the workers and cannot
prohibit strategies available to the workers. In the framework we consider in the present
work, the master might not be able to control all the parameters of the game (e.g., the
various reward schemes and other system parameters that are part of the game) but
it can indirectly influence the behavior of the workers by the one parameter that
it surely controls: the probability of auditing the results returned by the workers.  
}

A somewhat related work is~\cite{DM:lottery} in which they face the problem of bootstrapping a P2P computing system, in the presence of rational peers. The goal is to incentivize peers to join the system, for which they propose a scheme that mixes lottery psychology and multilevel marketing. In our setting, the master could use their scheme to recruit workers. We assume in this paper that enough workers are willing to participate in the computation.

\textcolor{black}{Du et al.~\cite{Duetal04} present a commitment-based sampling
scheme for cheater detection in Grid computing that is based on Merkle trees. Their
model considers a task as a domain of inputs $D=\{x_1,x_2,\ldots,x_n\}$ and a
function $f$ such that each $f(x)$ must be computed for all $x\in D$. Instead of 
using redundant task allocation (as in our work), or double-checking the 
worker's computation for each $x$, the master allocates the task to a worker
and randomly selects a small number of inputs from $D$ and double-checks these
results. As the authors point out, their technique works well for input domains
of large size (large $|D|$) but does not for small $|D|$; our redundant allocation
scheme could be considered for such small-sized input domains.}

\textcolor{black}{Kuhn et al.~\cite{KSW_IPDPS08} consider a distributed verification mechanism for
computational Grids. Instead of having the master checking and detecting
cheaters, their mechanism passes this ``responsibility'' to the workers. The master
distributes two different kinds of tasks to workers, regular computation tasks, and
checking units. For the first type the worker is required to compute its result, as oppose
to checking tasks that require the worker to perform a number of checks
for different results reported by other workers. Workers are encouraged to act 
correctly via credit points (that can be either used in a form of prestige, or be converted to
real money). The work in the present paper focuses on having the master to
obtain the correct result within some probability of success rather than detecting 
cheaters (partially this is achieved, but it is not the main objective of the master).
For cheater detection to be beneficiary, the computation must be run over several rounds.
Our work, instead, considers a one-shot protocol that enables fast termination (the benefit of
one-round mechanisms is partially supported by the work of Kondo et al.~\cite{KCWB02} 
that have demonstrated experimentally that tasks may take much more than 
one day of CPU time to complete).  
}

\paragraph{Framework.}
We consider a distributed system consisting of a master processor that assigns a computational task to a set of workers to compute and return the task result. 
The tasks considered in this work are assumed to have a unique solution. Although such limitation reduces the scope of application of the mechanisms presented~\cite{TACB05}, there are plenty of computations where the correct solution is unique. E.g., any mathematical function. 
Notice that we consider one-shot computations only. I.e., in this work we do not consider multiple computations where workers may accumulate reputation according to past behavior.

It is assumed that the master has the possibility of verifying whether the value returned by a worker is the correct result of the task.
It is also assumed that verifying an answer is more efficient than computing the task~\cite{GS:secureDC} (e.g., $NP$-complete problems if $P\neq NP$), but \afa{the correct result of the computation is not obtained if the verification fails.} Therefore, by verifying, the master does not necessarily obtain the correct answer (e.g., when all workers cheat)%
\footnote{Alternatively, one might assume that the master verifies by simply
performing the task and checking the answers of the workers. Our analysis can
easily be modified to accommodate this different model.}.
As in~\cite{CCS, DM:lottery}, workers are assumed to be rational and seek to maximize their benefit, i.e., they are not destructively malicious. We note that
this assumption can conceptually be justified by the work of Shneidman and Parkes~\cite{rational}
where they reason on the connection of rational players--of Algorithmic 
Mechanism Design--and workers in realistic P2P systems.
Furthermore, we do not consider non-intentional errors produced by hardware or software problems.

The general protocol used by master and workers is the following.
The master process assigns the task to $n$
workers.  Each worker processor $i$ cheats with
probability $p_{\CH}^{(i)}$ and the master processor verifies the answers with
some probability $p_{\VRF}$.  If the master processor verifies, it rewards the honest workers and penalizes the cheaters. If the master 
does not verify, it accepts the answer returned by the majority of workers. However, it does not penalize any worker given that the majority can be actually cheating. Instead, the master rewards workers according to one of the three following models.
Either the master rewards the majority only (\emph{Reward Model} \modelone),
or the master rewards all workers independently of the returned value 
(\emph{Reward Model} \modeltwo), or the master does not reward at all 
(\emph{Reward Model} \modelthree). 

The model used in this paper comprises the following \cgaa{form of collusion} (that covers realistic types of collusions such as Sybil attacks\cite{D:sybil}). Workers form colluding groups. Within
the same group workers act homogeneously, i.e., either all choose to cheat, or all choose to be honest, perhaps randomizing their decision by tossing a unique coin. In the case that, within the group, all workers choose to be honest, then only one of them computes the task, and all of them return that result to the master (in this way they avoid the cost of all of them executing the task). In the case that all workers choose to cheat, then they simply agree on a bogus result and send that to the master. In addition, we assume that all ``cheating groups'' return the same incorrect answer.
Both assumptions (homogeneous behavior within groups and unique incorrect answer) are adversarial.
Since the master accepts the majority, this behavior maximizes the chances of cheating the master. 
Being this the worst case~\textcolor{black}{(see also~\cite{S_FGCS})}, it subsumes models where cheaters do not necessarily return the same answer. 
\cgaa{Note that \mmaa{this behavior} can be viewed \mmaa{also} as a form
of collusion.} 
However, this observation does not imply that cheaters coordinate among them such behavior.
We also assume that if a worker does not perform the task, then it is (almost) impossible to guess the correct answer (i.e., the probability is negligible). The master, of course, is not aware of the collusions.

Given the protocol above, the game is defined by a set of parameters that include rewards to the
workers that return the correct value \mma{and} punishments to the workers
that cheated (that is, returned the incorrect result and ``got caught''). 
Hence, the game is played between the master and the workers, where the first wants to
obtain the correct result with a desired probability, while obtaining a desired utility
value (in expectation), and the workers decide whether to be honest or cheaters, 
depending on their expected utility gain or loss.
In this paper, we design several games and study the conditions under
which unique Nash equilibria \mma{(NE)} are achieved.
The reason for uniqueness is to force all workers to the same strategy;
this is similar to {\em strong implementation} in Mechanism Design, cf.,~\cite{BFN06}. 
(Multiple equilibria could be considered, making further assumptions about the procedure that workers follow to choose one of them. Although the approach might be promising in terms of the utility for the master, in this work correctness is the priority, which as shown later the mechanisms presented here guarantee.)
Each NE results in a different benefit for the master and a different probability of accepting an incorrect result. Thus, the master can choose some game conditions so that a unique NE that best fits its goals is achieved.

\paragraph{Contributions.}
The main contributions of this paper are:\vspace{.3em}

\parhead{1}
The identification of a collection of realistic payoff parameters that
allow to model Internet-based master-worker computational environments in game
theoretic terms. These parameters can either be fixed because they are system
parameters or be chosen by the master.\vspace{.3em}

\parhead{2}
The definition of four different games that the master can force to be
played: (a) A game between the master and a single worker, (b) a game
between the master and a worker, played $n$ times (with different
workers), (c) a game with a master and $n$ workers, and (d) a game of
$n$ workers in which the master participates indirectly. 
\cgaa{Games (c) and (d) consider collusions, game (a) considers no collusions 
as there is only one worker,} 
\mmaa{and game (b) only considers singleton groups, where all cheaters return the same value.}
Together with the three reward models defined above, we have overall 
defined twelve games among which the master can choose the most convenient 
to use in each specific context.\vspace{.3em}

\parhead{3}
The analyses of all the games under general payoff models, and the
characterization of conditions under which a unique Nash Equilibrium
point is reached for each game and each payoff-model. These analyses lead to
mechanisms that the master can run to trade cost and reliability.\vspace{.3em}

\parhead{4}
The design of mechanisms for two specific realistic scenarios, 
to demonstrate the utility of the analysis. These scenarios reflect, in their
fundamental elements, (a) a system of volunteer computing like
SETI, and (b) a company that buys computing cycles from Internet
computers and sells them to its customers in the form of a
task-computation service. 
The analysis provided for these scenarios in Tables~\ref{table:seti} and~\ref{table:contractor} comprise implicitly a mechanism to decide how to carry out the computation. More specifically, depending on the various parameters of the problem, such as the instance of payoff values, the desired probability of obtaining the correct answer, or a (possibly negative) lower bound on the master's utility, the master may obtain from these tables the game to be played and the reward model that maximally benefit its goal: accuracy, utility, or both.\vspace{.3em}

\parhead{5}
As examples of the use of the mechanism designed, we consider computations where the result must be correct with probability $1$ for any instance of payoff values. Under such requirement, our results show that for scenario (a) the best choice is non-redundant allocation, even with only singleton colluding groups. Furthermore, in this case we show that to obtain always the correct answer it is enough to verify with arbitrarily small probability.
Regarding examples of scenario (b), under the same requirement, we evaluate the mechanism for settings where one of three parameter values can be chosen: the number of workers, the worker's punishment for being caught cheating, or the cost of computing the task.
If the master only chooses the number of workers $n$, we show that, again even with only singleton colluding groups the best choice is non-redundant allocation. However, in order to achieve correctness,
the required probability of verifying can now be large.
When only one of the other two parameters is a choice of the master, namely either the worker's punishment for being caught cheating or the cost of computing the task, we show that the best game is not unique, and it depends on the rest of parameters of the system.\vspace{.3em} 

In general our analysis depicts the tradeoffs between cost and reliability  for a wide range of system parameters,  
payoffs, and reward  models.

\paragraph{Paper Structure.}
In Section~\ref{sec:Defs} we provide basic definitions to be used
throughout the paper.  In Section~\ref{sec:Gs} we present and analyze
the games proposed. In Section~\ref{sec:MD} the
mechanisms for the two realistic scenarios are designed. 
Finally, Section~\ref{sec:Conc} presents conclusions and future lines of work.


\section{Definitions}
\label{sec:Defs}

%
\paragraph{Game Definition.}
\label{section:gameDef}
Game participants are referred as workers and master. In order to define the game played in each case, we follow the
customary notation used in Game Theory.
Given that this notation is repeatedly used throughout the paper, we summarize it in Table~\ref{table:gamenotation} for clarity.
We assume that the master always chooses an odd number of workers $n$, which avoids ties in voting settings where the answer space is binary as it is assumed in this paper as a worst case.
In order to model 
\cgaa{collusion} among workers, we view the set of workers as a set of \mma{non-empty} subsets $W=\{W_1,\dots,W_{\ell}\}$ such that $\sum_{i=1}^{\ell} |W_i|=n$ and $W_i\cap W_j=\emptyset$ for all $i\neq j$, $1\leq i,j\leq \ell$. 
We refer to each of these subsets as a \emph{group of workers} or a \emph{group} for short. 
We also refer to groups and the master as \emph{players}.
Workers in the same group act homogeneously, i.e., either all choose to cheat, or all choose to be honest, \mma{perhaps randomizing their decision by} tossing a unique coin. 
Workers acting individually are modeled as a group of size one. 
It is assumed that the size or composition of each group is known only to the members of the group,
\cgaa{but all cheating groups return the same incorrect answer.}

\begin{table*}[p]\centering
\begin{tabular}{|c|l|}
\hline
$m_s$& payoff of the master for the strategy profile $s$\\
\hline
$M$& master processor\\
\hline
$p_{s_i}^{(i)}$& probability that group $W_i$ uses strategy ${s_i}$\\
\hline
$p_{s_M}$& probability that the master uses strategy ${s_M}$\\
\hline
$s$& strategy profile (a mapping from players to pure strategies)\\
\hline
$s_{-i}$& strategy used by each \mma{player but $W_i$} in the strategy profile $s$\\
\hline
$s_i$& strategy used by \mma{group $W_i$} in the strategy profile $s$\\
\hline
${S}_i=\{{\CH},\overline{{\CH}}\}$& set of pure strategies (cheat/not-cheat) available to group $W_i$\\
\hline
$s_{-M}$& strategy used by each \mma{player but the master} in the strategy profile $s$\\
\hline
$s_M$& strategy used by \mma{the master} in the strategy profile $s$\\
\hline
${S}_M=\{{\VRF},\overline{{\VRF}}\}$& set of pure strategies (verify/not-verify) of the master\\
\hline
$supp(\sigma_i)$& set of strategies of \mma{group $W_i$} with probability $>0$ (called support) in $\sigma$\\
\hline
$supp(\sigma_M)$& set of strategies of \mma{the master} with probability $>0$  (called support) in $\sigma$\\
\hline
$\sigma$& mixed strategy profile (a mapping from players to prob. distrib. over pure strategies)\\
\hline
$\sigma_{-i}$& probability distribution over pure strategies used by each \mma{player but $W_i$} in $\sigma$\\
\hline
$\sigma_i$& probability distribution over pure strategies used by \mma{group $W_i$} in $\sigma$\\
\hline
$\sigma_{-M}$& probability distribution over pure strategies used by \mma{each player but the master} in $\sigma$\\
\hline
$\sigma_M$& probability distribution over pure strategies used by \mma{the master} in $\sigma$\\
\hline
$U_i(s_i,\sigma_{-i})$& expected utility of group $W_i$ with mixed strategy profile $\sigma$\\ 
\hline
$U_M(s_M,\sigma_{-M})$& expected utility of master with mixed strategy profile $\sigma$\\ 
\hline
$w_s^{(i)}$& payoff of group $W_i$ for the strategy profile $s$\\
\hline
$W=\{W_1,\dots,W_{\ell}\}$& set of \mma{worker groups}\\
\hline
\end{tabular}
\caption{Game notation}
\label{table:gamenotation}
\end{table*}


A strategy profile is defined as a mapping from \mma{players} to pure strategies, denoted as $s$. 
For succinctness, we express a strategy profile as a collection of individual strategy choices together with \mma{collective} strategy choices. For instance, $s_i={\CH},s_M={\VRF},R_{-iM},F_{-iM},T_{-iM}$ stands for a strategy profile $s$ where 
 group $W_i$ chooses strategy ${\CH}$ (to cheat), 
 the master chooses strategy ${\VRF}$ (to verify), 
 a set $R_{-iM}$ of groups (where group $W_i$ and the master are not included) randomize their strategy choice with probability $p_{\CH}\in (0,1)$, 
 a set $F_{-iM}$ of groups \mma{deterministically} choose  strategy ${\CH}$, 
 and a set $T_{-iM}$ of groups \mma{deterministically} choose strategy $\overline{{\CH}}$ (to be honest). 
For games with one worker and the master, the strategy profile is composed only by their choices. For example, $m_{\CH\VRF}$ stands for the master's payoff in the case that the worker cheated and the master verified.
We require that, for each group $W_i$, $p_{\CH}^{(i)}=1-p_{\overline{\CH}}^{(i)}$ and, for the master, $p_{\VRF}=1-p_{\overline{\VRF}}$.
For games where we only have one group or all groups
 use the same probability, we will express $p_{\CH}^{(i)}$ (resp. $p_{\overline{{\CH}}}^{(i)}$) simply by $p_{\CH}$ (resp. $p_{\overline{{\CH}}}$).
Whenever the strategy is clear from the context, we will refer to the expected utility of group $W_i$ as $U_i$, and for the master as $U_M$.
In the games studied the master and the workers have complete information on 
the algorithm and the parameters involved, except on the number and
the composition of the colluding groups.

\paragraph{Equilibrium Definition.}
We define now precisely the conditions for the equilibrium.
In this context, the probability distributions are not independent among members of a group. Furthermore, 
the formulation of equilibrium conditions among individual workers would violate the very definition of equilibrium since the choice of a worker does change the choices of other workers. Instead, equilibrium conditions are formulated among groups. 
Of course, the computation of an equilibrium might not be possible since the size of the groups is unknown. 
But, finding appropriate conditions so that the unique equilibrium is the same independently of that size, the problem may be solved.
As it will be seen in the general analysis, depending on the specific combination of payoffs, reward models, and games, knowing some bound (e.g. the trivial one) on the size of the smallest and/or largest group is enough, and sometimes not even necessary. Furthermore, as shown in Section~\ref{sec:MD}, there are cases where all groups are singleton because non-redundant allocation is the best strategy.
An important point to be made is that the majority is evaluated in terms of number of single answers. Nevertheless, this fact has an impact on the payoffs of each player, which in this case is a whole group, but not in the correctness of the equilibrium formulation.

\cgaa{
Recall from~\cite{osborne} that for any finite game, a mixed strategy profile $\sigma^*$
is a \emph{mixed-strategy Nash equilibrium} (MSNE) if, and only if, for each player $\pi$ (either
a worker group or the master),
\begin{align}
\label{MSNEeq}U_\pi(s_\pi,\sigma_{-\pi}^*) &= U_\pi(s'_\pi,\sigma_{-\pi}^*), \forall s_\pi,s'_\pi \in supp(\sigma_\pi^*),\\
\label{MSNEineq}U_\pi(s_\pi,\sigma_{-\pi}^*) &\geq U_\pi(s'_\pi,\sigma_{-\pi}^*),
\forall s_\pi,s'_\pi: s_\pi\in supp(\sigma_\pi^*),
s'_\pi\notin supp(\sigma_\pi^*).
\end{align}

In words, given a MSNE with mixed-strategy profile $\sigma^*$,  for each player $\pi$, the expected utility, assuming that all other players do not change their choice, is the same for each pure strategy that the player can choose with positive probability in $\sigma^*$, and it is not less than the expected utility of any pure strategy with probability zero of being chosen in $\sigma^*$.
A \emph{fully} MSNE is an equilibrium with mixed strategy profile $\sigma$ where, for each player $\pi$, $supp(\sigma_\pi)={S}_\pi$.}

\paragraph{Payoffs Definition.}

We detail in Table~\ref{table:payoffs} the payoff definitions that will be used throughout the paper.
All the parameters in this table are non-negative. 

\begin{table}[!h]\centering
\begin{small}
\begin{tabular}{|c|l|}
\hline
$\Cp$& worker's punishment for being caught cheating\\
\hline
$\Ct$& group's cost for computing the task\\
\hline
$\SW$& worker's benefit from master's acceptance\\
\hline
$\Cw$& master's punishment for accepting a wrong answer\\
\hline
$\SM$& master's cost for accepting the worker's answer\\
\hline
$\Cv$& master's cost for verifying worker's answers\\
\hline
$\Bc$& master's benefit from accepting the right answer\\
\hline
\end{tabular}
\end{small}
\caption{Payoffs}
\label{table:payoffs}
\end{table}

Notice that
we split the reward to a worker into $\SW$ and $\SM$, to model the fact
that the cost of the master might be different than the benefit of a
worker. In fact, in some models they may be completely unrelated.
Among the parameters involved, we assume that the master has the freedom of choosing the cheater penalty $\Cp$ and the worker reward for computing $\SM$. By tuning these parameters and choosing $n$, the master achieves the desired trade-off between correctness and cost.
Given that the master does not know the composition of groups (if there is any), benefits and punishments are applied individually to each worker, except for the cost for computing the task $\Ct$ which is shared among all workers belonging to the same group \cga{(as it was explained in the Introduction)}.
Sharing the task cost while being paid/punished individually may provide incentive to collude, but it models precisely the real world situation where the collusion is carried out in secret. Otherwise, a colluding group could be simply taken as a single player.



\section{Equilibria Analysis}
\label{sec:Gs}

In the following sections, different games are studied 
depending on the participants involved.
In order to identify the parameter conditions for which there is an NE, Equations~(\ref{MSNEeq}) and~(\ref{MSNEineq}) 
of the MSNE definition are instantiated in each particular game, without making any assumptions on the payoffs. We call this {\bf\em the general payoffs model}. From these instantiations, we obtain conditions on the parameters (payoffs and probabilities) that would make such equilibrium unique.
Finally, we introduce the reward models described before on those conditions, so that we can compare among all games and models in Section~\ref{sec:MD}.

\subsection{Game \gameone: One Master - One Worker}
\label{sec:G1}

We start the analysis by considering the game between the master and only one worker. Hence, collusions can not occur and we refer to the group just as ``the worker."

\paragraph{General Payoffs Model.}
\label{section:1M1Wgeneral}
In order to evaluate all possible equilibria, all the different mixes have to be considered. In other words, according with the range of values that $p_{\CH}$ and $p_{\VRF}$ can take, we can have fully MSNE, partially MSNE, or pure-strategies NE.
More specifically, both $p_{\CH}$ and $p_{\VRF}$ can take values either $0$, $1$, or in the open interval $(0,1)$. Depending on these values, the different conditions in Equations~(\ref{MSNEeq}) and~(\ref{MSNEineq}) 
have to be achieved in order to have an equilibrium.
Hence, conditions on $p_{\CH}$ and $p_{\VRF}$ for each equilibrium can be obtained from these equations.

For instance, for the case when $p_{\CH}\in(0,1),p_{\VRF}\in(0,1)$: From Equation~(\ref{MSNEeq}), 
 there is a fully MSNE if $U_M(\VRF ,p_{\CH})=U_M(\overline{\VRF },p_{\CH})$ and
 $U_W({\CH},p_{\VRF})=U_W(\overline{\CH},p_{\VRF})$ simultaneously. These equations
 determine the value of $p_{\CH}$ and $p_{\VRF}$ in the MSNE as follows.
\begin{gather*}
p_{\CH} m_{{\CH}\VRF } + (1-p_{\CH}) m_{\overline{\CH}\VRF }=
p_{\CH} m_{\CH\overline{\VRF }} + (1-p_{\CH}) m_{\overline{\CH}\overline{\VRF }}\\
p_{\CH} = \frac{m_{\overline{\CH}\overline{\VRF }} - m_{\overline{\CH}\VRF }}{m_{{\CH}\VRF } - m_{\overline{\CH}\VRF }- m_{\CH\overline{\VRF }} + m_{\overline{\CH}\overline{\VRF }}}.
\end{gather*}
\begin{gather*}
p_{\VRF} w_{{\CH}\VRF }+(1-p_{\VRF})w_{\CH \overline{\VRF }}=
p_{\VRF} w_{\overline{\CH }\VRF }+(1-p_{\VRF})w_{\overline{\CH }\overline{\VRF }}\\
p_{\VRF}=\frac{w_{\overline{\CH }\overline{\VRF }}-w_{\CH \overline{\VRF }}}{w_{\CH \VRF }-w_{\CH \overline{\VRF }}-w_{\overline{\CH }\VRF }+w_{\overline{\CH }\overline{\VRF }}}.
\end{gather*}
The computation of conditions for the other range-cases of $p_{\CH}$ and $p_{\VRF}$ is similar. 

On the other hand, the expected utility of the master and the worker in any equilibrium are
$U_M=p_{\CH}p_{\VRF} m_{{\CH}{\VRF}} + (1-p_{\CH})p_{\VRF} m_{\overline{{\CH}}{\VRF}}+
p_{\CH}(1-p_{\VRF}) m_{{\CH}\overline{{\VRF}}} + (1-p_{\CH})(1-p_{\VRF}) m_{\overline{{\CH}}\overline{{\VRF}}}$ and
$U_W=p_{\CH}p_{\VRF} w_{{\CH}{\VRF}} + p_{\CH}(1-p_{\VRF}) w_{{\CH}\overline{{\VRF}}}+
(1-p_{\CH})p_{\VRF} w_{\overline{{\CH}}{\VRF}} + (1-p_{\CH})(1-p_{\VRF}) w_{\overline{{\CH}}\overline{{\VRF}}}$ respectively, and the probability of accepting the wrong answer is
$\mathbf{P}_{wrong}=(1-p_{\VRF})p_{\CH}$.

\paragraph{Reward Model \modelone.}
\label{section:paytomajority11}
Recall that in this model we assume that when the master does not verify, it rewards only the majority. Given that there is only one worker, in this case the master rewards always. Under the payoff model detailed in Table~\ref{table:payoffs}, the payoffs are
\begin{center}
\begin{tabular}{ll}
$m_{{\CH}{\VRF}}=-\Cv$&
$w_{{\CH}{\VRF}}=-\Cp$\\
$m_{\overline{{\CH}}{\VRF}}=\Bc-\Cv-\SM$&
$w_{\overline{{\CH}}{\VRF}}=\SW-\Ct$\\
$m_{{\CH}\overline{{\VRF}}}=-\Cw-\SM$&
$w_{{\CH}\overline{{\VRF}}}=\SW$\\
$m_{\overline{{\CH}}\overline{{\VRF}}}=\Bc-\SM$&
$w_{\overline{{\CH}}\overline{{\VRF}}}=\SW-\Ct$
\end{tabular}
\end{center}
Replacing appropriately, we obtain the conditions for equilibrium,
probability of accepting the wrong answer, and utilities for each case.


\paragraph{Reward Model \modeltwo.}
In this model we assume that if the master does not
verify, it rewards all workers independently of the answer. Hence, the analysis is identical to the
previous case.


\paragraph{Reward Model \modelthree.}
\label{section:paytonone11}
Recall that in this model we assume that if the master does not
verify, it does not reward the worker. Hence, under the payoff
model detailed in Table~\ref{table:payoffs}, the payoffs are:
\begin{center}
\begin{tabular}{ll}
$m_{{\CH}{\VRF}}=-\Cv$&
$w_{{\CH}{\VRF}}=-\Cp$\\
$m_{\overline{{\CH}}{\VRF}}=\Bc-\Cv-\SM$&
$w_{\overline{{\CH}}{\VRF}}=\SW-\Ct$\\
$m_{{\CH}\overline{{\VRF}}}=-\Cw$&
$w_{{\CH}\overline{{\VRF}}}=0$\\
$m_{\overline{{\CH}}\overline{{\VRF}}}=\Bc$&
$w_{\overline{{\CH}}\overline{{\VRF}}}=-\Ct$
\end{tabular}
\end{center}


Replacing appropriately, we obtain the conditions for equilibrium,
probability of accepting the wrong answer, and utilities for each case, as we will see in the next section. The probability of
accepting the wrong result, the master utility for each case, the
conditions for equilibrium, and the workers utility for the reward models \modelone and \modelthree
can be obtained from
Tables~\ref{table:game2model1} and~\ref{table:game2model3}
by replacing $n=1$.

\subsection{Game \gametwo: $n$ Games One to One}
\label{sec:G2}

In this section it is considered the case where the master runs $n$ instances of the one to one game analyzed in the previous section. Workers are assumed to compute the equilibrium as if they were playing alone against the master. Hence, given the assumption that the players are rational and compute the equilibrium to decide what to do, the consideration of collusion is meaningless for this game. 
Hence, all groups are assumed to have exactly one member\cgaa{; we do assume however
that cheaters return the same incorrect value (to obtain worst case analysis).} Games where workers know about the existence of other workers and they can collude to fool the master are studied later.
Given the equilibria computed in Section~\ref{sec:G1}, the master runs
$n$ instances of that game, one with each of the $n$ workers, choosing
to verify or not with probability $p_{\VRF}$ only once.  Additionally, when
paying while not verifying, the master rewards all or none according
with the one-to-one game.

\noindent{\bf General Payoffs Model.}
Since this game is just a multiple-instance version of the previous game, under the payoff model detailed in Table~\ref{table:payoffs}, the conditions for equilibria and the utility of a worker are the same as in Section~\ref{sec:G1}.
However, the expected utility of the master and the probability of accepting
the wrong result change. In order to give those expressions, we
define the following notation. Let ${W}$ be the set of
partitions in two subsets $(F,T)$ of $W$, i.e.,
${W}=\{(F,T)|F\cap T=\emptyset,F\cup T=W\}$. $F$
is the set of workers that cheat and $T$ the set of honest workers. We also define master payoff functions
$m_s:\{0,1,\dots,n\}\rightarrow \mathbb{R}$, that still depend on the number
of workers that cheat or not, but are not necessarily just $n$ times the individual payoff of a \gameone game (reflecting the fact
that the cost may include some fixed amount for unique verification or
unique cost of being wrong). For the sake of clarity, we
will denote the probability that the majority cheats as
$\mathbf{P}_{\CH}$.


\begin{table}[p]\centering
\begin{scriptsize}
\begin{tabular}{|c|c|c|c|c|}
\hline
\begin{tabular}{c}
Equilibrium\\$p_{\CH}, p_{\VRF}$
\end{tabular}&
Conditions&$\mathbf{P}_{wrong}$&$U_M$&$U_{W_i}$\\
\hline
$\frac{\Cv}{\SM+\Cw}$,
$\frac{\Ct}{\SW+\Cp}$&
&
$(1-p_{\VRF})\mathbf{P}_{\CH}$&
\begin{minipage}{1cm}
\begin{tabbing}
$p_{\VRF}($\=$(1-p_{\CH}^n)\Bc-$\\
\>$\Cv-(1-p_{\CH})n\SM)+$\\
$(1-p_{\VRF})($\=$\Bc(1-\mathbf{P}_{\CH})-$\\
\>$\Cw\mathbf{P}_{\CH}-n\SM)$
\end{tabbing}
\end{minipage}
&
$\SW-\Ct$\\
\hline
$0$,
\begin{tabular}{c}
$\frac{\Ct}{\SW+\Cp}\leq p_{\VRF}<1$\\
$0<p_{\VRF}$
\end{tabular}&
$\Cv=0$&
$0$&
$\Bc-n\SM$&
$\SW-\Ct$\\
\hline
$1$,
\begin{tabular}{c}
$0<p_{\VRF}\leq\frac{\Ct}{\SW+\Cp}$\\
$p_{\VRF}<1$
\end{tabular}&
$\Cv=\Cw+\SM$&
$1-p_{\VRF}$&
\begin{tabular}{c}
$-p_{\VRF}\Cv-(1-p_{\VRF})(\Cw+n\SM)$
\end{tabular}&
\begin{tabular}{c}
$(1\!-\!p_{\VRF})\SW\!-\!\!\!$\\$p_{\VRF}\Cp$
\end{tabular}\\
\hline
\begin{tabular}{c}
$0\leq p_{\CH}\leq\frac{\Cv}{\SM+\Cw}$\\
$p_{\CH}<1$
\end{tabular},
$0$&
$\Ct=0$&
$\mathbf{P}_{\CH}$&
$\Bc(1-\mathbf{P}_{\CH})-\Cw\mathbf{P}_{\CH}-n\SM$&
$\SW$\\
\hline
\begin{tabular}{c}
$\frac{\Cv}{\SM+\Cw}\leq p_{\CH}<1$\\
$0<p_{\CH}$
\end{tabular},
$1$&
$\Ct=\SW+\Cp$&
$0$&
\begin{tabular}{l}
$(1-\prod_{j\in W} p_{\CH}^{(j)})\Bc-\Cv-$\\
$\sum_{(W_F,W_T)\in{W}}$
$\prod_{j\in W_F} p_{\CH}^{(j)}\cdot$\\
$\prod_{k\in W_T} (1-p_{\CH}^{(k)}) |W_T|\SM$
\end{tabular}&
$-\Cp$\\
\hline
1,
1&
\begin{tabular}{c}
$\Cv\leq \Cw+\SM$\\
$\Ct\geq \SW+\Cp$
\end{tabular}&
$0$&
$-\Cv$&
$-\Cp$\\
\hline
0,
1&
\begin{tabular}{c}
$\Cv=0$\\
$\Ct\leq \SW+\Cp$
\end{tabular}&
$0$&
$\Bc-n\SM$&
$\SW-\Ct$\\
\hline
1,
0&
$\Cv \geq \Cw+\SM$&
$1$&
$-\Cw-n\SM$&
$\SW$\\
\hline
\end{tabular}
\end{scriptsize}
\caption{Game \gametwo, Models \modelone and \modeltwo (and Game \gameone for $n=1$)\vspace{-5em}}
\label{table:game2model1}
\end{table}


\begin{table}[p]\centering
\begin{scriptsize}
\begin{tabular}{|c|c|c|c|c|}
\hline
\begin{tabular}{c}
Equilibrium\\
$p_{\CH}, p_{\VRF}$
\end{tabular}&Conditions&$\mathbf{P}_{wrong}$&$U_M$&$U_{W_i}$\\
\hline
$\frac{\Cv+\SM}{\SM+\Cw}$,
$\frac{\Ct}{\SW+\Cp}$&
&
$(1-p_{\VRF})\mathbf{P}_{\CH}$&
\begin{minipage}{1cm}
\begin{tabbing}
$p_{\VRF}($\=$(1-p_{\CH}^n)\Bc-$\\
\>$\Cv-(1-p_{\CH})n\SM)+$\\
$(1-p_{\VRF})($\=$\Bc(1-\mathbf{P}_{\CH})-$\\
\>$\Cw\mathbf{P}_{\CH})$
\end{tabbing}
\end{minipage}
&
$-p_{\VRF}\Cp$\\
\hline
$0$,
\begin{tabular}{c}
$\frac{\Ct}{\SW+\Cp}\leq p_{\VRF}<1$\\
$0<p_{\VRF}$
\end{tabular}&
$\SM=\Cv=0$&
$0$&
$\Bc$&
$p_{\VRF}\SW-\Ct$\\
\hline
$1$,
\begin{tabular}{c}
$0<p_{\VRF}\leq\frac{\Ct}{\SW+\Cp}$\\
$p_{\VRF}<1$
\end{tabular}&
$\Cv=\Cw$&
$1-p_{\VRF}$&
$-\Cv$&
$-p_{\VRF}\Cp$\\
\hline
\begin{tabular}{c}
$0\leq p_{\CH}\leq\frac{\Cv+\SM}{\SM+\Cw}$\\
$p_{\CH}<1$
\end{tabular},
$0$&
$\Ct=0$&
$\mathbf{P}_{\CH}$&
$\Bc(1-\mathbf{P}_{\CH})-\Cw\mathbf{P}_{\CH}$&
$0$\\
\hline
\begin{tabular}{c}
$\frac{\Cv+\SM}{\SM+\Cw}\leq p_{\CH}<1$\\
$0<p_{\CH}$
\end{tabular},
$1$&
$\Ct=\SW+\Cp$&
$0$&
\begin{tabular}{l}
$(1-\prod_{j\in W} p_{\CH}^{(j)})\Bc-\Cv-$\\
$\sum_{(W_F,W_T)\in{W}}$
$\prod_{j\in W_F} p_{\CH}^{(j)}\cdot$\\
$\prod_{k\in W_T} (1-p_{\CH}^{(k)}) |W_T|\SM$
\end{tabular}&
$-\Cp$\\
\hline
$1$,
$1$&
\begin{tabular}{c}
$\Cv\leq \Cw$\\
$\Ct\geq \SW+\Cp$
\end{tabular}&
$0$&
$-\Cv$&
$-\Cp$\\
\hline
$0$,
$1$&
\begin{tabular}{c}
$\Cv=\SM=0$\\
$\Ct\leq \SW+\Cp$
\end{tabular}&
$0$&
$\Bc$&
$\SW-\Ct$\\
\hline
$1$,
$0$&
$\Cv\geq \Cw$&
$1$&
$-\Cw$&
$0$\\
\hline
\end{tabular}
\end{scriptsize}
\caption{Game \gametwo, Model \modelthree (and Game \gameone for $n=1$)}
\label{table:game2model3}
\end{table}

Then, the probability that the majority cheats, the probability of being wrong, and the master's utility are
\begin{align*}
\mathbf{P}_{\CH}=&\sum_{\substack{(F,T)\in{W}\\|F|>|T|}} \prod_{f\in F} p_{\CH}^{(f)} \prod_{t\in T} (1-p_{\CH}^{(t)}),\\
\mathbf{P}_{wrong}=&(1-p_{\VRF})\mathbf{P}_{\CH},\\
U_M=&p_{\VRF}\sum_{(F,T)\in{W}} \prod_{f\in F} p_{\CH}^{(f)} \prod_{t\in T} (1-p_{\CH}^{(t)}) m_{\VRF}+
(1-p_{\VRF})\sum_{(F,T)\in{W}} \prod_{f\in F} p_{\CH}^{(f)} \prod_{t\in T} (1-p_{\CH}^{(t)}) m_{\overline{{\VRF}}}.
\end{align*}
Respectively,
where $m_{\VRF}=m_{{\CH}{\VRF}}(|F|)+m_{\overline{{\CH}}{\VRF}}(|T|)$ and $m_{\overline{{\VRF}}}=m_{{\CH}\overline{{\VRF}}}(|F|)+m_{\overline{{\CH}}\overline{{\VRF}}}(|T|)$.


\paragraph{Reward Models.}
In this game, we assume that the cost of verification $\Cv$ is independent of the number of workers (since all cheating workers return the same value) and that, as long as some
worker is honest, upon verification the master obtains the
correct result. It is important to note that, under this assumption,
the probability of obtaining the correct result is not
$1-\mathbf{P}_{wrong}$, given that if the master verifies but all
workers cheat, the master does not obtain the correct result. Recall that
the master plays $n$ instances of a one-to-one game, thus, depending
on the model, it must reward every worker if not verifying
independently of majorities.  We summarize the probability of
accepting the wrong result, the master utility for each case, the
conditions for equilibrium, and the workers utility for the reward models \modelone and \modelthree
in Tables~\ref{table:game2model1} and~\ref{table:game2model3}
respectively (Tables~\ref{table:game2model1}
and~\ref{table:game2model3} give also these values for Game \gameone
replacing appropriately $n=1$).

\subsection{Game \gamethree: No Master in the Game}
\label{sec:G4}
Another natural generalization of the game of Section~\ref{sec:G1} is
to consider a game in which the master assigns the task to $n$ workers that play the game among them. Intuitively, it can be seen that, in case of not verifying, workers will compete to be in the majority (to persuade the master). 
Given that workers know the existence of the other workers, including collusions in the analysis is in order.
The question of how the participation of the master in the game would affect the results obtained in this section is addressed in Section~\ref{sec:G3}.

\paragraph{General Payoffs Model.}
In order to analyze this game, it is convenient to partition the set of groups. More precisely, consider disjoint sets $F$, $T$ and $R$, such that $F\cup T\cup R=W$, as follows. 
$F$ is the set of groups that choose to cheat as a pure strategy, i.e., $F=\{W_i|W_i\in W\land p_{\CH}^{(i)}=1\}$. 
$T$ is the set of groups that choose not to cheat as a pure strategy, i.e., $T=\{W_i|W_i\in W\land p_{\CH}^{(i)}=0\}$. 
$R$ is the set of groups that randomize their choice, i.e., $R=\{W_i|W_i\in W\land p_{\CH}^{(i)}\in(0,1)\}$. 
Let $F_{-i}=F\setminus \{W_i\}$, $T_{-i}=T\setminus \{W_i\}$, and $R_{-i}=R\setminus \{W_i\}$.
Let $\Gamma_{-i}$ be the set of partitions in two subsets $(R_F,R_T)$ of $R_{-i}$, i.e., $\Gamma_{-i}=\{(R_F,R_T)|R_F\cap R_T=\emptyset\land R_F\cup R_T=R_{-i}\}$. 
Let $\mathbf{E}[w_{s}^{(i)}]$ be the expected payoff of group $W_i$ for the strategy profile $s$, taking the expectation over the choice of the master of verifying or not. 
Then, for each group $W_i\in W$ and for each strategy profile $s_{-i}=R_{-i},F_{-i},T_{-i}$, we have
\begin{align*}
U_i&(s_{-i},s_i={\CH})=
\sum_{(R_F,R_T)\in\Gamma_{-i}} \prod_{W_f\in R_F} p_{\CH}^{(f)}
 \prod_{W_t\in R_T} (1-p_{\CH}^{(t)}) \mathbf{E}[w_{\substack{F_{-i}\cup R_F,\\T_{-i}\cup
     R_T,\\s_i={\CH}}}^{(i)}],\\
U_i&(s_{-i},s_i=\overline{{\CH}})=
\sum_{(R_F,R_T)\in\Gamma_{-i}}
 \prod_{W_f\in R_F} p_{\CH}^{(f)} \prod_{W_t\in R_T} (1-p_{\CH}^{(t)})
 \mathbf{E}[w_{\substack{F_{-i}\cup R_F,\\T_{-i}\cup R_T,\\s_i=\overline{{\CH}}}}^{(i)}].
 \end{align*}

In words, the expected utility of a worker in a group that chooses to cheat (resp. to be honest) is, by linearity of expectation, the expected payoff of the worker, the expectation taken over the choice of the master, averaged over all combinations of outcomes cheat/honest of other groups that choose to randomize their strategy choice, this average weighted by the probability of such outcomes.

In order to find conditions for a desired equilibrium, we study 
what we call the \emph{utility differential} of a worker,  i.e. the difference on the expected utility of a worker if its group chooses to cheat with respect to the case when the group chooses to be honest. More precisely,
\begin{align} 
\Delta U_i(s)=U_i(s_{-i},s_i=\CH)-U_i(s_{-i},s_i=\overline{\CH}).\label{diffutil}
\end{align}

\mmaa{


For clarity, define $N_{F-i}=\sum_{S\in F_{-i}\cup R_F} |S|$ and $N_{T-i}=\sum_{S\in T_{-i}\cup R_T} |S|$,
i.e. the number of cheaters and honest workers respectively except for those in group $W_i$.
We also define what we call the \emph{payoff differential} as the difference on the expected payoff of a worker, the expectation taken over the choice of the master, if its group chooses to cheat with respect to the case when the group chooses to be honest. Furthermore, we denote the payoff differential depending on whether the size of the group has an impact on what is the majority outcome. 
More precisely, for each partition $(R_F,R_T)\in\Gamma_i$, let

$\Delta w_{\CH}^{(i)}=\mathbf{E}[w_{s_i={\CH}}^{(i)}]-\mathbf{E}[w_{s_i=\overline{{\CH}}}^{(i)}]$, 
when $N_{F-i}-N_{T-i}>|W_i|$,

$\Delta w_{\overline{\CH}}^{(i)}=\mathbf{E}[w_{s_i={\CH}}^{(i)}]-\mathbf{E}[w_{s_i=\overline{{\CH}}}^{(i)}]$,
when $N_{T-i}-N_{F-i}>|W_i|$, and

$\Delta w_X^{(i)}=\mathbf{E}[w_{s_i={\CH}}^{(i)}]-\mathbf{E}[w_{s_i=\overline{{\CH}}}^{(i)}]$,
when $|N_{F-i}-N_{T-i}|<|W_i|$.

In words, the payoff differential of each worker in a group when the majority cheats or is honest independently of the group's choice, and the payoff differential when the decision of the group may change the majority.
Given that the payoff depends only on the outcome majority, replacing this notation in Equation~\ref{diffutil}, we have

\begin{align}
\Delta U_i(s)=\nonumber
&\Delta w_{\CH}^{(i)}
\sum\!\!\!_{\substack{(R_F,R_T)\in\Gamma_{-i}\\N_{F-i}-N_{T-i}>|W_i|}}
\prod_{W_f\in R_F} p_{\CH}^{(f)}
\prod_{W_t\in R_T} (1-p_{\CH}^{(t)})+\nonumber\\
&\Delta w_X^{(i)}
\sum\!\!\!_{\substack{(R_F,R_T)\in\Gamma_{-i}\\|N_{F-i}-N_{T-i}|<|W_i|}}
\prod_{W_f\in R_F} p_{\CH}^{(f)}
\prod_{W_t\in R_T} (1-p_{\CH}^{(t)})+\nonumber\\
&\Delta w_{\overline{{\CH}}}^{(i)}
\sum\!\!\!_{\substack{(R_F,R_T)\in\Gamma_{-i}\\N_{T-i}-N_{F-i}>|W_i|}}
\prod_{W_f\in R_F} p_{\CH}^{(f)}
\prod_{W_t\in R_T} (1-p_{\CH}^{(t)}).
\label{eq:G3grp}
\end{align}
}

In words, the utility differential of a worker is the average of its payoff differential over the three cases defined by the impact of its group over the majority, this average weighted by the probability of such cases.

Restating Equations~(\ref{MSNEeq}) or~(\ref{MSNEineq}) in terms of Equation~(\ref{eq:G3grp}), the equilibrium conditions are, 
for each group that does not choose a pure strategy, the differential utility must be zero ($\forall i\in R, \Delta U_i(s)=0$);
for each group that chooses to cheat as a pure strategy, the differential utility must not be negative ($\forall i\in F,\Delta U_i(s)\geq 0$); 
and for each group that chooses to be honest as a pure strategy, the differential utility must not be positive ($\forall i\in T,\Delta U_i(s)\leq 0$).

The following lemma, which is crucially used in the rest of our analysis, shows that, if there is a given total order among the payoff differentials defined, in order to attain a unique equilibrium all groups must decide deterministically. The proof is based on an algebraic argument. 

\begin{lemma}
\label{lemma:nouniquemixedgroups}
Given a game as defined, if $\Delta w_{\CH}^{(i)}\geq\Delta w_X^{(i)}\geq\Delta w_{\overline{\CH}}^{(i)}$ for every group $W_i\in W$, then there is no unique equilibrium where $R\neq\emptyset$ (i.e, all groups decide deterministically).
\end{lemma}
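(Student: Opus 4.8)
The plan is to recognize the game among worker groups as one of \emph{strategic complements} and to exploit the feature, already visible in Equation~(\ref{eq:G3grp}), that a group's incentive to cheat does not depend on its own mixing probability. From these two facts I will show that whenever some equilibrium has a randomizing group, there is also a \emph{pure} equilibrium, so that no equilibrium with $R\neq\emptyset$ can be the unique one. This gives the contrapositive of the statement: a unique equilibrium forces $R=\emptyset$.

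First I would record the linearity of payoffs. For a fixed profile $s_{-i}$ of the other groups, the expected utility of group $W_i$ is $U_i(s_{-i},s_i=\overline{\CH})+p_{\CH}^{(i)}\,\Delta U_i(s)$, where $\Delta U_i(s)$ is exactly the utility differential of Equation~(\ref{eq:G3grp}); crucially, the right-hand side of~(\ref{eq:G3grp}) is a sum over partitions of $R_{-i}$ and therefore does not involve $p_{\CH}^{(i)}$ at all. Hence each group's utility is affine in its own cheating probability, with slope $\Delta U_i(s)$. The second, and main, ingredient is \textbf{monotonicity}: under the hypothesis $\Delta w_{\CH}^{(i)}\geq\Delta w_X^{(i)}\geq\Delta w_{\overline{\CH}}^{(i)}$, the slope $\Delta U_i(s)$ is non-decreasing in $p_{\CH}^{(j)}$ for every other group $W_j$. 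To see this, fix the strategies of all groups other than $W_i$ and $W_j$, and pair each partition of $R_{-i}$ that places $W_j$ on the honest side with its twin that places $W_j$ on the cheating side; the twin has $N_{F-i}$ larger by $|W_j|$, so it falls in a case that is at least as ``cheat-heavy'' along the chain $\overline{\CH}\to X\to\CH$, whence by the assumed ordering its payoff differential is no smaller. Reweighting these paired terms as $p_{\CH}^{(j)}$ grows can therefore only increase $\Delta U_i(s)$.

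Given linearity and monotonicity, the best response of group $W_i$ is to cheat when $\Delta U_i>0$, to be honest when $\Delta U_i<0$, and arbitrary when $\Delta U_i=0$. I would then define the ``greedy-cheat'' map $B$ on the lattice $[0,1]^{\ell}$ by $B(p)_i=1$ if $\Delta U_i(p_{-i})\geq 0$ and $B(p)_i=0$ otherwise. By the monotonicity step $B$ is order-preserving, and since it takes values in the finite sublattice $\{0,1\}^{\ell}$, iterating $B$ from the all-cheat profile $\mathbf{1}$ yields a non-increasing sequence that stabilizes at a fixed point $\bar p$ (Knaster--Tarski). Every fixed point of $B$ is a Nash equilibrium, because each coordinate $\bar p_i\in\{0,1\}$ is by construction a best response to $\bar p_{-i}$; moreover $\bar p$ is a pure profile. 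Now if some equilibrium $\sigma^{*}$ has $R\neq\emptyset$, i.e.\ some group plays $p_{\CH}^{(i)}\in(0,1)$, then $\sigma^{*}$ differs from the pure equilibrium $\bar p$, so there are at least two distinct equilibria and $\sigma^{*}$ cannot be unique.

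I expect the monotonicity claim to be the crux, since it is the only place the ordering hypothesis enters and it requires the partition-pairing bookkeeping sketched above (in particular, checking that moving a single group from the honest to the cheating side advances the applicable case by at most one step along $\overline{\CH}\to X\to\CH$, so that the ordering applies term by term). It is also worth stressing why a purely local argument will not work: because $\Delta U_i$ is independent of $p_{\CH}^{(i)}$ but does depend on the other randomizers, perturbing a single mixing probability generically violates the indifference conditions $\Delta U_k=0$ of the remaining groups $W_k\in R$. This coupling is exactly what forces the global, lattice-theoretic construction of a second (pure) equilibrium, rather than a continuum obtained from a one-parameter perturbation of $\sigma^{*}$.
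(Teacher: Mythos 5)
Your proof is correct, but it takes a genuinely different route from the paper's. The paper argues by contradiction directly on the indifference conditions: assuming a unique equilibrium with $R\neq\emptyset$, it splits on the sign of $\Delta w_{\CH}^{(i)}$ for groups in $R$ --- if $\Delta w_{\CH}^{(i)}\geq 0$ for all of them, it asserts a second equilibrium in which all of $R$ cheats, contradicting uniqueness; if $\Delta w_{\CH}^{(i)}<0$ for some $W_i\in R$, the assumed ordering makes all three payoff differentials of $W_i$ negative, so (after noting $|R|>1$ and that the case-probabilities in Equation~(\ref{eq:G3grp}) are then strictly positive) the indifference condition $\Delta U_i=0$ cannot be satisfied, contradicting $W_i\in R$. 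You instead exploit the strategic-complements structure: the termwise pairing argument showing that $\Delta U_i$ is non-decreasing in each $p_{\CH}^{(j)}$, $j\neq i$, is indeed the crux and the only place the ordering hypothesis enters (exactly where the paper uses it too), and the monotone best-response iteration from the all-cheat profile in the finite lattice $\{0,1\}^{\ell}$ then produces a pure Nash equilibrium, from which any equilibrium with $R\neq\emptyset$ must differ. Your route is longer but buys more: it establishes unconditionally that a pure equilibrium exists under the ordering hypothesis, and it rigorously supplies the step the paper leaves implicit in its first case --- the profile ``all of $R$ cheats'' need not itself be an equilibrium (e.g., if $\Delta w_X^{(i)}<0\leq\Delta w_{\CH}^{(i)}$ and the new profile puts $W_i$ in the pivotal case), whereas your iteration always terminates at a genuine fixed point, possibly a different pure profile. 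The paper's argument, in exchange, is much shorter and purely local. One immaterial slip in your closing remark: moving a single group from the honest to the cheating side can advance the applicable case by two steps (from $\overline{\CH}$ directly to $\CH$ when $|W_j|$ is large), not ``at most one''; this is harmless, since your argument only needs weak monotonicity along the totally ordered chain $\Delta w_{\CH}^{(i)}\geq\Delta w_X^{(i)}\geq\Delta w_{\overline{\CH}}^{(i)}$.
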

\begin{proof}
 For the sake of contradiction, assume there is a unique equilibrium
 $\sigma$ for which $R\neq\emptyset$ and $\Delta w_{\CH}^{(i)}\geq\Delta
 w_X^{(i)}\geq\Delta w_{\overline{\CH}}^{(i)}$ for every group $W_i\in W$.
 Then, for every group $W_i\in R$, $\Delta U_i(s)=0$ must be solvable.
 If $\Delta w_{\CH}^{(i)}\geq 0$, for all $W_i\in R$,
 there would be also an equilibrium where all groups in $R$ choose to cheat
 and $\sigma$ would not be unique, which is a contradiction. 
 Consider now the case where there exists some $W_i\in R$ such that $\Delta
 w_{\CH}^{(i)}<0$. Then, it must hold that $|R|>1$, otherwise $\Delta U_i=0$ is false for $W_i$.
 Given that $|R|>1$, the probabilities given by the summations in Equation~(\ref{eq:G3grp}) for $W_i$ are all
 strictly bigger than zero. Therefore, given that $\Delta U_i=0$ must be solvable, at least one of $\Delta
 w_X^{(i)}> 0$ and $\Delta w_{\overline{\CH}}^{(i)}> 0$ must hold, which is also a
 contradiction with the assumption that $\Delta w_{\CH}^{(i)}\geq\Delta
 w_X^{(i)}\geq\Delta w_{\overline{\CH}}^{(i)}$. 
\end{proof}

In the following sections, conditions to obtain unique
equilibria under different payoff models are studied. In all these models it holds that
$\Delta w_{\CH}^{(i)}\geq\Delta w_X^{(i)}\geq\Delta w_{\overline{\CH}}^{(i)}$ for all $W_i\in W$. Then, by
Lemma~\ref{lemma:nouniquemixedgroups}, there is no unique equilibrium where
$R\neq\emptyset$. Regarding equilibria where $R=\emptyset$, unless the task
assigned has a binary output \mma{(the answer can be negated)}, a unique equilibrium where all groups
choose to cheat is not useful. Then, we set up $p_{\VRF}$ so that $\Delta w_{\CH}^{(i)}<0$, $\Delta
w_X^{(i)}<0$ and $\Delta w_{\overline{\CH}}^{(i)}<0$ for all $W_i\in W$ so that $\Delta U_i\geq 0$ has
no solution and no group can choose to cheat as a pure strategy. Thus,
the only equilibrium is for all the groups to choose to be honest, which solves $\Delta U_i\leq 0$. Therefore, $p_{\CH}^{(i)}=0$, $\forall W_i\in W$, and hence $\mathbf{P}_{wrong}=0$.


\paragraph{Reward Model \modelone.}
\label{section:wWgrppaymaj}
Replacing appropriately the payoffs detailed in Table~\ref{table:payoffs}, we obtain for any group $W_i\in W$
\begin{align*}
\Delta w_{\CH}^{(i)} 
&= -p_{\VRF}|W_i|(\Cp+2\SW)+|W_i|\SW+\Ct,\\
\Delta w_X^{(i)} 
&= -p_{\VRF}|W_i|(\Cp+\SW)+\Ct,\\
\Delta w_{\overline{\CH}}^{(i)} 
&= -p_{\VRF}|W_i|\Cp-|W_i|\SW+\Ct.
\end{align*}

To make $\Delta w_{\CH}^{(i)}<0$ we want 
\begin{align*}
p_{\VRF}>\frac{|W_i|\SW+\Ct}{|W_i|(\Cp+2\SW)}, \forall W_i \in W.
\end{align*}

And the expected utilities are then
\begin{align*}
U_M &=\Bc-p_{\VRF}\Cv-n\SM\\
U_{W_i} &=|W_i|\SW-\Ct, \textrm{\ for each\ } W_i\in W.
\end{align*}

\paragraph{Reward Model \modeltwo.}
\label{section:wWgrppayall}
Similarly, for any group $W_i\in W$,
\begin{align*}
\Delta w_{\CH}^{(i)} 
&= -p_{\VRF}|W_i|(\Cp+\SW)+\Ct,\\
\Delta w_X^{(i)} 
&= -p_{\VRF}|W_i|(\Cp+\SW)+\Ct,\\
\Delta w_{\overline{{\CH}}}^{(i)} 
&= -p_{\VRF}|W_i|(\Cp+\SW)+\Ct.
\end{align*}

Then, the condition to obtain the desired unique equilibrium and the expected utilities are
\begin{align*}
p_{\VRF}&>\frac{\Ct}{|W_i|(\Cp+\SW)},\forall W_i\in W,\\
U_M&=\Bc-p_{\VRF}\Cv-n\SM,\\
U_{W_i}&=|W_i|\SW-\Ct, \textrm{\ for each\ } W_i\in W.
\end{align*}


\paragraph{Reward Model \modelthree.}
\label{section:wWgrppaynone}
Again, for any group $W_i\in W$,
\begin{align*}
\Delta w_{\CH}^{(i)} 
&= -p_{\VRF}|W_i|(\Cp+\SW)+\Ct,\\
\Delta w_X^{(i)} 
&= -p_{\VRF}|W_i|(\Cp+\SW)+\Ct,\\
\Delta w_{\overline{{\CH}}}^{(i)} 
&= -p_{\VRF}|W_i|(\Cp+\SW)+\Ct.
\end{align*}

And the condition to obtain the unique equilibrium and the expected utilities are
\begin{align*}
p_{\VRF}&>\frac{\Ct}{|W_i|(\Cp+\SW)},\forall W_i\in W,\\
U_M&=\Bc-p_{\VRF}(\Cv+n\SM),\\
U_{W_i}&=p_{\VRF}|W_i|\SW-\Ct, \textrm{\ for each\ } W_i\in W.
\end{align*}


In order to maximize the master utility we would like to design games where $p_{\VRF}$ is small. Therefore, we look for a lower bound on $p_{\VRF}$. It is easy to see that, in all of the three payoff models, the worst case lower bound is given by the group of minimum size. Although at a first glance this fact seems counterintuitive, it is not surprising due to the following two reasons. On one hand, colluders are likely to be in the majority, but the unique equilibrium occurs when all workers are honest. On the other hand, the extra benefit that workers obtain by colluding is not against the master interest since it is just a saving in computation costs.


\subsection{Game \gamefour: One Master - $n$ Workers}
\label{sec:G3}
We now observe how the conditions obtained in the previous game are modified if the master also participates as a player. The equilibria analysis regarding groups follows the same lines as in Section~\ref{sec:G4}. However, now Equations~(\ref{MSNEeq}) and~(\ref{MSNEineq}) have to be applied to the master, as follows.

\paragraph{General Payoffs Model.}
Recall that $R$ is the set of groups that randomize their
choice. Let $\Gamma$ be the set of partitions in two subsets $(R_F,R_T)$ of
$R$, i.e., $\Gamma=\{(R_F,R_T)|R_F\cap R_T=\emptyset\land R_F\cup
R_T=R\}$.  Then, for the master,
\begin{align*}
U_M&(R,F,T,s_M={\VRF})=
\sum_{(R_F,R_T)\in\Gamma} \prod_{f\in R_F} p_{\CH}^{(f)}
 \prod_{t\in R_T} (1-p_{\CH}^{(t)}) m_{\substack{F\cup R_F,\\T\cup
     R_T,\\s_M={\VRF}}}
\end{align*}
\begin{align*}
U_M&(R,F,T,s_M=\overline{{\VRF}})=
\sum_{(R_F,R_T)\in\Gamma} \prod_{f\in R_F}
 p_{\CH}^{(f)} \prod_{t\in R_T} (1-p_{\CH}^{(t)}) m_{\substack{F\cup
     R_F,\\T\cup R_T,\\s_M=\overline{{\VRF}}}}.
\end{align*}
From Equation~(\ref{MSNEeq}), if $p_{\VRF}\in(0,1)$, the MSNE condition is
$$U_M(R,F,T,s_M={\VRF})=U_M(R,F,T,s_M=\overline{{\VRF}}).$$  
From Equation~(\ref{MSNEineq}), if $p_{\VRF}=0$ the condition is
$$U_M(R,F,T,s_M={\VRF})\leq U_M(R,F,T,s_M=\overline{{\VRF}}),$$ 
and if $p_{\VRF}=1$ the condition is 
$$U_M(R,F,T,s_M={\VRF})\geq U_M(R,F,T,s_M=\overline{{\VRF}}).$$

The MSNE conditions for groups are the same as in Section~\ref{sec:G4}. Hence, the conditions obtained for each of the reward models are the same. However, additional conditions are obtained from the master-utility conditions as follows. As in Section~\ref{sec:G4}, the desired unique MSNE occurs when $p_{\CH}=0$. Using that, in the master-utility conditions we get for the reward model \modelone that if $p_{\VRF}<1$, 
$\Bc-\Cv-n\SM=\Bc-n\SM$,
and if $p_{\VRF}=1$, 
$\Bc-\Cv-n\SM\geq \Bc-n\SM$. Therefore, in any case it must hold $\Cv=0$.
For the reward model \modeltwo,
the master-utility conditions give, if $p_{\VRF}<1$,  
$\Bc-\Cv-n\SM=\Bc-n\SM$ and if $p_{\VRF}=1$,
$\Bc-\Cv-n\SM\geq \Bc-n\SM$. Therefore, again, $\Cv=0$.
Finally, for the reward model \modelthree, the master-utility conditions give if $p_{\VRF}<1$,
$\Bc-\Cv-n\SM=\Bc$ and if $p_{\VRF}=1$, $\Bc-\Cv-n\SM\geq \Bc$. Therefore,
$\Cv=\SM=0$.
Hence, to achieve the goal of forcing the groups to be honest, \emph{in this game, verifying must be free for the master.}



\section{Algorithmic Mechanisms}
\label{sec:MD}
In this section two realistic scenarios in which the
master-worker model considered could be naturally applicable are proposed. For
these scenarios, we determine appropriate games and parameters to be
used by the master to maximize its benefit. 

The basic protocol (mechanism) used by the master to accept the
correct task result while maximizing its benefit is as follows:
Given the payoff parameters (these can either be fixed by the system or be chosen
by the master), the master sends the task (to be computed), the game to be played,
the probability of verification $p_\VRF$, and the payoff model to be used.
For computational reasons, the master also sends a certificate to the workers. The certificate includes the strategy that the workers must play to achieve the unique NE, together with the appropriate data to demonstrate this fact%
\footnote{The certificate is included only for cases where resource limitations preclude the worker from computing the unique equilibrium, but it is not related to distributions over public signals (as in a correlated equilibrium) since workers do not randomize their choice according to this certificate.}. More details for the use of the certificate are given in Section~\ref{section:complex}.

After receiving the replies from all workers, and independently of the distribution of the answers, the master processor chooses to verify the answers with the probability $p_\VRF$. If the answers were not verified it accepts the result of the majority. Then, it applies the corresponding reward model. The protocol is detailed in Algorithm~\ref{alg1}.

\begin{algorithm}[h]
\label{alg1}
\SetKwFor{Upon}{upon}{do}{endupon}
send (task, game, $p_{\VRF}$, payoff model ${R}$, certificate) to all workers\;
\Upon{receiving all answers}{
verify the answers with probability $p_{\VRF}$\;\label{step:ch}
\If{the answers were not verified}{accept the majority\;}
apply the reward model\;
}
\caption{Master algorithm}
\end{algorithm}

Hence, the master, given the payoff parameters, can determine the 
game and parameters (including the value of $p_\VRF$) to force
the workers into a unique NE, that would yield the correct task result (with high probability) while maximizing the master's benefit. Examples of specific parameters (including the value of $p_{\VRF}$) and games such that the master can achieve this are analyzed in the following subsections. 

\subsection{SETI-like Scenario}

The first scenario considered is a volunteer computing system such
as SETI@home, where users accept to donate part of their processors
idle time to collaborate in the computation of large tasks. In this
case, we assume that workers incur in no cost to perform the task, but
they obtain a benefit by being recognized as having performed it
(possibly in the form of prestige\cga{, e.g, by being included on SETI's top contributors
list}). Hence, we assume that $\SW>\Ct=0$. The
master incurs in a (possibly small) cost $\SM$ when rewarding a worker
(e.g., by advertising its participation in the project).  As assumed
in the general model, in this model the master may verify the values
returned by the workers, at a cost $\Cv>0$. We also assume that the
master obtains a benefit $\Bc>\SM$ if it accepts the correct result of
the task, and suffers a cost $\Cw>\Cv$ if it accepts an incorrect
value. 

Under these constraints, the equilibria for games \gameone and \gametwo collapse to
one single equilibrium point. Also, since game~\gamefour requires free verification ($\Cv=0$) for
the equilibrium to be unique, it cannot be used in this
scenario. The different applicable cases are summarized in
Table~\ref{table:seti}. In this table it can be observed that in games \gameone
and \gametwo the equilibrium is achieved with any value of $p_{\CH}$ in an
interval. The master has no way to force the specific value of $p_{\CH}$
that a worker uses within the interval. And, in particular, it cannot force
$p_{\CH}=0$ (i.e., $\mathbf{P}_{wrong}=0$). Additionally, looking at the
master utility, all games have $U_M< \Bc$. However, in game (\gamethree,\modelthree) the
master can make $U_M$ arbitrarily close to $\Bc$ by setting $p_{\VRF}$
arbitrarily small. (Notice that the utility of a worker will be
arbitrarily small likewise, but given that workers are volunteering this is not a problem.) 
{\em In conclusion, the game (\gamethree,\modelthree)
with $n=1$ ($|W|=|W_i|=1$) and very small $p_{\VRF}$ is the best choice in this scenario, since it
satisfies $\mathbf{P}_{wrong}=0$ and $U_M \approx \Bc$. We highlight this observation in the following theorem.}

\begin{theorem}
For any set of payoff parameters that can be characterized as the SETI scenario, 
in order to obtain the correct answer (with probability $1$), 
it is enough to assign the task to only one worker excluding the master from the game,
and verify with arbitrarily small probability. 
Additionally, if the worker is rewarded only when the result is verified to be correct, 
the mechanism yields a utility for the master that is almost optimal.
\end{theorem}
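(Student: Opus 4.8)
The plan is to specialize the general analysis of Game \gamethree\ (Section~\ref{sec:G4}) to the SETI constraints $\Ct=0$, $\SW>0$, $\Cv>0$, $\Bc>\SM$, and $\Cw>\Cv$, and then read off both the correctness and the utility claims. First I would dispose of Game \gamefour: the analysis of Section~\ref{sec:G3} shows that, when the master participates as a player, the desired unique equilibrium requires $\Cv=0$; since verification is costly in this scenario ($\Cv>0$), that game cannot deliver a unique equilibrium, which is exactly what justifies \emph{excluding the master from the game} and working with Game \gamethree.

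For correctness, I would substitute $\Ct=0$ into the reward-model-\modelthree\ payoff differentials derived in Section~\ref{sec:G4}, where $\Delta w_{\CH}^{(i)}=\Delta w_X^{(i)}=\Delta w_{\overline{\CH}}^{(i)}=-p_{\VRF}|W_i|(\Cp+\SW)+\Ct$. With $\Ct=0$ this collapses to $-p_{\VRF}|W_i|(\Cp+\SW)$, which is \emph{strictly negative for every $p_{\VRF}>0$}, no matter how small, precisely because $\SW>0$. As argued in Section~\ref{sec:G4} (using Lemma~\ref{lemma:nouniquemixedgroups} to rule out mixed groups, since the total order on the differentials holds here with equality, and then forcing all differentials negative so that no group can cheat as a pure strategy), the unique equilibrium is the all-honest profile, giving $p_{\CH}^{(i)}=0$ for every group and hence $\mathbf{P}_{wrong}=0$. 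Fixing $n=1$ makes every group a singleton, so collusion is vacuous and the conclusion holds even with a single worker.

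For the utility claim, I would invoke the master-utility expression for Game \gamethree\ under Model \modelthree, namely $U_M=\Bc-p_{\VRF}(\Cv+n\SM)$; setting $n=1$ yields $U_M=\Bc-p_{\VRF}(\Cv+\SM)$, which tends to $\Bc$ as $p_{\VRF}\to 0^{+}$. To read this as \emph{almost optimal}, I would argue that $\Bc$ is an upper bound on the master's utility in any correctness-achieving mechanism: the gross gain from accepting the correct answer is exactly $\Bc$, while any reward to the worker ($\SM\geq 0$) and any verification ($\Cv>0$) contribute only non-negative expected costs, so $U_M\le\Bc$ always; moreover a strictly positive $p_{\VRF}$ is needed for the equilibrium to be unique, so the bound $\Bc$ is approached but never met.

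The main obstacle is the optimality half, not the correctness half. Correctness is essentially a substitution: the whole force of the statement is the observation that $\Ct=0$ lets the threshold on $p_{\VRF}$ collapse to $0$, so that an \emph{arbitrarily small} verification probability already drives the single worker to honesty. The delicate point is justifying ``almost optimal'' rigorously — establishing that $\Bc$ is a legitimate supremum over the master's achievable utilities (not merely over the cases tabulated for this scenario) and that no admissible parameter choice attains it exactly, so that the gap $p_{\VRF}(\Cv+\SM)$ can be driven to zero but never closed.
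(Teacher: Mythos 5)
Your proposal is correct and takes essentially the same route as the paper: exclude game \gamefour because uniqueness there forces $\Cv=0$, specialize the game-\gamethree analysis under reward model \modelthree to $\Ct=0$ so that the threshold on $p_{\VRF}$ collapses to zero and Lemma~\ref{lemma:nouniquemixedgroups} together with strictly negative payoff differentials makes the all-honest profile the unique equilibrium (hence $\mathbf{P}_{wrong}=0$, with $n=1$ making collusion vacuous), then read off $U_M=\Bc-p_{\VRF}(\Cv+\SM)\to\Bc$ as $p_{\VRF}\to 0^{+}$. The only minor difference is the justification of ``almost optimal'': the paper simply inspects Table~\ref{table:seti} and notes $U_M<\Bc$ in every applicable case, whereas you argue directly that $\Bc$ is a supremum over the master's achievable utilities in this framework, a slightly more self-contained version of the same observation.
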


\begin{table*}[tb]\centering
\begin{scriptsize}
\begin{tabular}{|c|c|c|c|c|}
\hline
(Game,Model) & Equilibrium&$\mathbf{P}_{wrong}$&$U_M$&$U_{W_i}$\\
&$p_{\CH},p_{\VRF}$&&&\\
\hline
(\gameone,\modelone), (\gameone,\modeltwo)&
\begin{tabular}{c}
$0\leq p_{\CH}\leq\frac{\Cv}{\SM+\Cw}$,
$p_{\CH}<1$
\end{tabular},
$p_{\VRF}=0$&
$p_{\CH}$&
$\Bc-p_{\CH}(\Bc+\Cw)-\SM$&
$\SW$\\
\hline
\begin{tabular}{c}
(\gameone,\modelthree)
\end{tabular}&
\begin{tabular}{c}
$0\leq p_{\CH}\leq\frac{\Cv+\SM}{\SM+\Cw}$,
$p_{\CH}<1$
\end{tabular},
$p_{\VRF}=0$&
$p_{\CH}$&
$\Bc-p_{\CH}(\Bc+\Cw)$&
$0$\\
\hline
\begin{tabular}{c}
(\gametwo,\modelone), (\gametwo,\modeltwo)
\end{tabular}&
\begin{tabular}{c}
$0\leq p_{\CH}\leq\frac{\Cv}{\SM+\Cw}$,
$p_{\CH}<1$
\end{tabular},
$p_{\VRF}=0$&
$\mathbf{P}_{\CH}$&
$\Bc-\mathbf{P}_{\CH}(\Bc+\Cw)-n\SM$&
$\SW$\\
\hline
\begin{tabular}{c}
(\gametwo,\modelthree)
\end{tabular}&
\begin{tabular}{c}
$0\leq p_{\CH}\leq\frac{\Cv+\SM}{\SM+\Cw}$,
$p_{\CH}<1$
\end{tabular},
$p_{\VRF}=0$&
$\mathbf{P}_{\CH}$&
$\Bc-\mathbf{P}_{\CH}(\Bc+\Cw)$&
$0$\\
\hline
\begin{tabular}{c}
(\gamethree,\modelone)
\end{tabular}&
$p_{\CH}=0$,
$\frac{\SW}{\Cp+2\SW}<p_{\VRF}\leq 1$&
$0$&
$\Bc-p_{\VRF}\Cv-n\SM$&
$|W_i|\SW$\\
\hline
\begin{tabular}{c}
(\gamethree,\modeltwo)
\end{tabular}&
$p_{\CH}=0$,
$0<p_{\VRF}\leq 1$&
$0$&
$\Bc-p_{\VRF}\Cv-n\SM$&
$|W_i|\SW$\\
\hline
\begin{tabular}{c}
(\gamethree,\modelthree)
\end{tabular}&
$p_{\CH}=0$,
$0<p_{\VRF}\leq 1$&
$0$&
$\Bc-p_{\VRF}(\Cv+n\SM)$&
$p_{\VRF}|W_i|\SW$\\
\hline
\end{tabular}
\end{scriptsize}
\caption{SETI-like Scenario}
\label{table:seti}
\end{table*}

\begin{table*}[tb]\centering
\begin{scriptsize}
\begin{tabular}{|c|c|c|c|c|}
\hline
\begin{tabular}{c}
(Game,Model)
\end{tabular}&
Equilibrium&$\mathbf{P}_{wrong}$&$U_M$&$U_{W_i}$\\
&$p_{\CH},p_{\VRF}$&&&\\
\hline
\begin{tabular}{c}
(\gameone,\modelone), (\gameone,\modeltwo)
\end{tabular}&
$\frac{\Cv}{\SM+\Cw}$,
$\frac{\Ct}{\SW+\Cp}$&
$(1-p_{\VRF})p_{\CH}$&
$\Bc-p_{\CH}(\Bc+\Cw)-\SM$&
$\SW-\Ct$\\
\hline
\begin{tabular}{c}
(\gameone,\modelthree)
\end{tabular}&
$\frac{\Cv+\SM}{\SM+\Cw}$,
$\frac{\Ct}{\SW+\Cp}$&
$(1-p_{\VRF})p_{\CH}$&
$\Bc-p_{\CH}(\Bc+\Cw)$&
$-p_{\VRF}\Cp$\\
\hline
\begin{tabular}{c}
(\gametwo,\modelone), (\gametwo,\modeltwo)
\end{tabular}&
$\frac{\Cv}{\SM+\Cw}$,
$\frac{\Ct}{\SW+\Cp}$&
$(1-p_{\VRF})\mathbf{P}_{\CH}$&
\begin{tabular}{l}
$(p_{\VRF}(1-p_{\CH}^n)+(1-p_{\VRF})(1-\mathbf{P}_{\CH}))\Bc$\\
$-p_{\VRF}\Cv-(1-p_{\VRF})\mathbf{P}_{\CH}\Cw$\\
$-(1-p_{\VRF}p_{\CH})n\SM$
\end{tabular}&
$\SW-\Ct$\\
\hline
\begin{tabular}{c}
(\gametwo,\modelthree)
\end{tabular}&
$\frac{\Cv+\SM}{\SM+\Cw}$,
$\frac{\Ct}{\SW+\Cp}$&
$(1-p_{\VRF})\mathbf{P}_{\CH}$&
\begin{tabular}{l}
$(p_{\VRF}(1-p_{\CH}^n)+(1-p_{\VRF})(1-\mathbf{P}_{\CH}))\Bc$\\
$-p_{\VRF}\Cv-(1-p_{\VRF})\mathbf{P}_{\CH}\Cw$\\
$-p_{\VRF}(1-p_{\CH})n\SM$
\end{tabular}&
$-p_{\VRF}\Cp$\\
\hline
\begin{tabular}{c}
(\gamethree,\modelone)
\end{tabular}&
$0$,
$\frac{|W_i|\SW+\Ct}{|W_i|(\Cp+2\SW)}<p_{\VRF}\leq 1$&
$0$&
$\Bc-p_{\VRF}\Cv-n\SM$&
$|W_i|\SW-\Ct$\\
\hline
\begin{tabular}{c}
(\gamethree,\modeltwo)
\end{tabular}&
$0$,
$\frac{\Ct}{|W_i|(\Cp+\SW)}<p_{\VRF}\leq 1$&
$0$&
$\Bc-p_{\VRF}\Cv-n\SM$&
$|W_i|\SW-\Ct$\\
\hline
\begin{tabular}{c}
(\gamethree,\modelthree)
\end{tabular}&
$0$,
$\frac{\Ct}{|W_i|(\Cp+\SW)}<p_{\VRF}\leq 1$&
$0$&
$\Bc-p_{\VRF}(\Cv+n\SM)$&
$p_{\VRF}|W_i|\SW-\Ct$\\
\hline
\end{tabular}
\end{scriptsize}
\caption{Contractor Scenario}
\label{table:contractor}
\end{table*}

\subsection{Contractor Scenario}

The second scenario considered is a company that buys computational
power from Internet users and sells it to computation-hungry
costumers. In this case the company pays the users an amount $S=\SW=\SM$
for using their computing capabilities, and charges the consumers
another amount $\Bc>\SM$ for the provided service. Since the users are
not volunteers in this scenario, we assume that computing a task is not free
for them (i.e., $\Ct>0$), and they must have incentives to participate
(i.e., $U_{W_i}>0, \forall W_i\in W$). As in the previous case, we assume that the master
verifies and has a cost for accepting a wrong value, such that
$\Cw>\Cv>0$. Again, under these assumptions, the equilibria for games
\gameone and \gametwo collapse to unique equilibria and game~\gamefour can not be used. The different cases are summarized in Table~\ref{table:contractor}. Observe
that there are cases in this table in which the group has negative
expected utility $U_{W_i}$. Given that in this \afa{scenario} workers are not volunteers,
they will not accept to participate in such a game. This fact
immediately rules out games (\gameone,\modelthree) and (\gametwo,\modelthree)
and requires that $\SW>\Ct$ in general.  Similarly, this
restriction forces the master to use a value of $p_{\VRF} > \Ct/|W_i|\SW, \forall W_i\in W$ in
game (\gamethree,\modelthree). Finally, comparing games (\gamethree,\modelone) and (\gamethree,\modeltwo), it can be seen
that the master would never choose the former, because the lower bound
of $p_{\VRF}$ is smaller in the latter while the rest of expressions are
the same, which leads to a larger master utility. 

In this scenario, beyond choosing the game and number of workers $n$ 
as in the previous one, we assume that the master can also choose the reward
$\SW$ to the workers for correctly computing the task, and the
punishment $\Cp$ if they are caught returning an incorrect
value. All possible combined variations of these parameters yield a
huge number of cases to be considered. 
In what follows, we assume that the
master only can choose one of these parameters, while the rest are
predefined. A study of richer combinations is left for future work.

The following notation is used for clarity. Whenever a parameter
may be different among different games being compared, a
super-index indicates the game to which the parameter belongs. For
instance, $U_M^{(i,j)}$ is the utility of the master for game
$(i,j)$. $\SM$ and $\SW$ are referred to as simply $S$ \mma{($=\SM=\SW$).}

A simple observation of games (\gamethree,\modeltwo) and (\gamethree,\modelthree) leads to find that in
both cases it is convenient for the master to choose the smallest
possible value of $p_{\VRF}$. For this reason, in the following we assume
in these games values $p_{\VRF}^{(\gamethrees,\modeltwo)}=\frac{\Ct}{\Cp+S}+\gamma^{(\gamethrees,\modeltwo)}$ and
$p_{\VRF}^{(\gamethrees,\modelthree)}=\frac{\Ct}{S}+\gamma^{(\gamethrees,\modelthree)}$, for arbitrarily small
$\gamma^{(\gamethrees,\modeltwo)}>0$ and $\gamma^{(\gamethrees,\modelthree)}>0$~\footnote{We assume here the worst case scenario where $\min_{W_i\in W}\{|W_i|\}=1$. If a better lower bound can be guaranteed, a similar analysis taking it into account follows.}.


\subsubsection{Tunable $n$}
 Regarding games (\gameone,\modelone) and (\gametwo,\modelone),
 in this case the master has no control over $p_{\CH}$ or $p_{\VRF}$, since
 they are completely defined by the application parameters. Hence,
 the probability of accepting a wrong answer might be arbitrarily
 close to $1$, even for game (\gametwo,\modelone), because $\mathbf{P}_{\CH}$ grows with
 $n$ if $p_{\CH}>1/2$ as shown in Claim~\ref{claim:pL}. Given that we
 want to design a mechanism that can be applied to any setting, we
 rule out these games for this case.
In the case that $n$ is tunable, the benefit of the master in games
 (\gamethree,\modeltwo) and (\gamethree,\modelthree) decreases as $n$ increases. Hence for these games   the master chooses $n=1$. (So, $|W|=|W_i|=1$.)  Additionally, these games provide
 $\mathbf{P}_{wrong}=0$. 
 {\em Out of these games, (\gamethree,\modeltwo) is better iff 
 $\Cv>S(S/\Ct-1)(S/\Cp+1)$.}  
We highlight these observations in the following theorem.

\begin{theorem}
For any given set of payoff parameters, such that it can be characterized as the contractor scenario and where $S=\SM=\SW>\Ct$,
if the master gets to choose the number of workers,
in order to obtain the correct answer (with probability $1$) while maximizing the utility of the master, 
it is enough to assign the task to only one worker excluding the master from the game and

(i) if $\Cv<S(S/\Ct-1)(S/\Cp+1)$,
reward the worker only when the result is verified to be correct,
and verify with probability $p_{\VRF} = \varepsilon + \Ct/S$,

(ii) otherwise, 
reward the worker also if the result is not verified,
and verify with probability $p_{\VRF} = \varepsilon + \Ct/(S+\Cp)$,

for any positive $\varepsilon$ arbitrarily close to $0$.
\end{theorem}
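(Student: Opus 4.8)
The plan is to reduce the twelve game/model combinations to a single optimal mechanism in two stages: first discard every combination that cannot guarantee $\mathbf{P}_{wrong}=0$, and then, among the survivors, maximize $U_M$ over the master's free parameters $n$ and $p_{\VRF}$ while respecting the worker participation constraint $U_{W_i}>0$. All equilibrium values, utilities and bounds I need are read off Table~\ref{table:contractor}.

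First I would eliminate the games that cannot force correctness for arbitrary payoffs. In games \gameone and \gametwo the equilibrium fixes $p_{\CH}$ at a strictly positive constant determined by the payoffs and sets $p_{\VRF}=\Ct/(\SW+\Cp)<1$, so $\mathbf{P}_{wrong}>0$ (it equals $(1-p_{\VRF})p_{\CH}$ for \gameone and $(1-p_{\VRF})\mathbf{P}_{\CH}$ for \gametwo); moreover, by Claim~\ref{claim:pL}, $\mathbf{P}_{\CH}\to 1$ as $n$ grows whenever $p_{\CH}>1/2$, so tuning $n$ cannot rescue \gametwo. Game \gamefour requires free verification ($\Cv=0$) for its equilibrium to be unique, contradicting $\Cv>0$ in this scenario. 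This leaves precisely the three combinations of game \gamethree, each of which forces $p_{\CH}=0$ and hence delivers $\mathbf{P}_{wrong}=0$.

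Next I would optimize inside game \gamethree. In all three models $U_M$ is strictly decreasing in $n$ (through the $-n\SM$ or $-p_{\VRF} n\SM$ terms), so the master picks the smallest admissible (odd) value $n=1$, giving $|W|=|W_i|=1$. Model \modelone is then dominated by \modeltwo: both share $U_M=\Bc-p_{\VRF}\Cv-n\SM$, which decreases in $p_{\VRF}$, but the equilibrium lower bound on $p_{\VRF}$ is strictly smaller for \modeltwo, namely $\Ct/(\Cp+S)<(S+\Ct)/(\Cp+2S)$, an inequality that clears to $S(\Cp+S-\Ct)>0$ and thus holds because $S>\Ct$ and $\Cp\geq 0$. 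A smaller feasible $p_{\VRF}$ yields strictly larger utility, so \modelone is discarded.

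Finally I would compare \modeltwo and \modelthree at their optimal parameters, and this is the step I expect to be the main obstacle, because the binding constraint on $p_{\VRF}$ differs between the two models. For \modeltwo participation is automatic since $U_{W_i}=S-\Ct>0$, so the binding constraint is the equilibrium bound $p_{\VRF}>\Ct/(\Cp+S)$. For \modelthree, however, $U_{W_i}=p_{\VRF}|W_i|\SW-\Ct>0$ forces $p_{\VRF}>\Ct/S$, which exceeds the equilibrium bound $\Ct/(\Cp+S)$ and is therefore the binding one. Substituting the minimal values $p_{\VRF}=\Ct/(\Cp+S)+\varepsilon$ and $p_{\VRF}=\Ct/S+\varepsilon$ into $U_M^{(\gamethrees,\modeltwo)}=\Bc-p_{\VRF}\Cv-S$ and $U_M^{(\gamethrees,\modelthree)}=\Bc-p_{\VRF}(\Cv+S)$ and letting $\varepsilon\to 0$, the inequality $U_M^{(\gamethrees,\modeltwo)}>U_M^{(\gamethrees,\modelthree)}$ reduces, after clearing denominators, to $\Cp\,\Ct\,\Cv/(S(\Cp+S))>S-\Ct$, i.e.\ $\Cv>S(S/\Ct-1)(S/\Cp+1)$. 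This establishes case (ii) with model \modeltwo and $p_{\VRF}=\varepsilon+\Ct/(S+\Cp)$, while the reverse inequality establishes case (i) with model \modelthree and $p_{\VRF}=\varepsilon+\Ct/S$, completing the proof.
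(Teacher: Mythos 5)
Your proposal is correct and takes essentially the same approach as the paper: eliminate games \gameone and \gametwo (their unique equilibria have $p_{\CH}>0$, hence $\mathbf{P}_{wrong}>0$) and game \gamefour (it needs $\Cv=0$), set $n=1$ in game \gamethree, discard \modelone as dominated by \modeltwo, and compare \modeltwo against \modelthree under their respective binding constraints (the equilibrium bound $\Ct/(\Cp+S)$ versus the participation bound $\Ct/S$), recovering exactly the paper's threshold $\Cv = S(S/\Ct-1)(S/\Cp+1)$. Your algebra at each step (including the clearing to $S(\Cp+S-\Ct)>0$ and to $\Cp\Ct\Cv/(S(\Cp+S))>S-\Ct$) checks out, so there is no gap to report.
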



\subsubsection{Tunable $\Cp$} 
We first compare games (\gamethree,\modeltwo) and (\gamethree,\modelthree),
\begin{align*}
U_M^{(\gamethrees,\modeltwo)}
&=\Bc-p_{\VRF}^{(\gamethrees,\modeltwo)}\Cv-nS\\
&=\Bc-\Ct\Cv/(S+\Cp^{(\gamethrees,\modeltwo)})-nS-\gamma^{(\gamethrees,\modeltwo)}\Cv
\end{align*}
and
\begin{align*}
U_M^{(\gamethrees,\modelthree)}
&=\Bc-p_{\VRF}^{(\gamethrees,\modelthree)}\Cv-p_{\VRF}^{(\gamethrees,\modelthree)}nS\\
&=\Bc-\Ct\Cv/S-n\Ct-\gamma^{(\gamethrees,\modelthree)}\Cv-\gamma^{(\gamethrees,\modelthree)}nS.
\end{align*}
{\em Thus, game (\gamethree,\modelthree) is better iff $nS(S/\Ct-1) > \Cv$ for small enough
$\gamma^{(\gamethrees,\modelthree)}$. Otherwise, (\gamethree,\modeltwo) is better for small enough
$\gamma^{(\gamethrees,\modeltwo)}$ and large enough $\Cp^{(\gamethrees,\modeltwo)}$.}
As argued in the previous case, in this case the master has no control
over $p_{\CH}$. Although the master can reduce $\Cp$ to increase $p_{\VRF}$, it
can not make $p_{\VRF}$ arbitrarily close to 1 to reduce
$\mathbf{P}_{wrong}$ in case $p_{\CH}$ is big (and consequently
$\mathbf{P}_{\CH}$). Then, some cases might lead to a big probability of
accepting the wrong answer. Thus, games (\gameone,\modelone) and (\gametwo,\modelone)
are ruled out from consideration.
We highlight these observations in the following theorem.

\begin{theorem}
For any given sets of workers and payoff parameters, except for $\Cp$ that is chosen by the master.
If the set of payoffs is such that $S=\SM=\SW>\Ct$ and it can be characterized as the contractor scenario,
in order to obtain the correct answer (with probability $1$) while maximizing the utility of the master, 
it is enough to exclude the master from the game and

(i) if $\Cv<nS(S/\Ct-1)$,
reward the workers only when the result is verified to be correct,
and verify with probability $p_{\VRF} = \varepsilon + \Ct/S$,

(ii) otherwise, 
reward the workers also if the result is not verified,
set $\Cp$ so that $\Cv<nS(S/\Ct-1)(S/\Cp+1)$,
and verify with probability $p_{\VRF} = \varepsilon + \Ct/(S+\Cp)$,

for any positive $\varepsilon$ arbitrarily close to $0$.
\end{theorem}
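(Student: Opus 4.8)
The plan is to specialize the contractor-scenario analysis to the situation in which $\Cp$ is the master's only free parameter, pruning the candidate games and then carrying out the two-way utility comparison whose crossover yields the two cases.

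First I would eliminate every game except the \gamethree family. Because $\Cv>0$ here, game \gamefour is unavailable, since its unique equilibrium needs free verification ($\Cv=0$). Games (\gameone,\modelthree) and (\gametwo,\modelthree) give $U_{W_i}<0$ at equilibrium and hence violate the participation requirement, so they are dropped. In (\gameone,\modelone) and (\gametwo,\modelone) the master controls neither $p_{\CH}$ nor $p_{\VRF}$, which are fixed by the payoffs; with only $\Cp$ to tune it cannot push $p_{\VRF}$ toward $1$, so $\mathbf{P}_{wrong}=(1-p_{\VRF})\mathbf{P}_{\CH}$ can remain bounded away from $0$ (indeed $\mathbf{P}_{\CH}$ may approach $1$ by Claim~\ref{claim:pL}); these games cannot guarantee correctness and are ruled out. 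The three surviving \gamethree games all reach the pure equilibrium $p_{\CH}=0$, so $\mathbf{P}_{wrong}=0$; this settles the ``correct answer with probability $1$'' part and justifies excluding the master.

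Next I would cut the comparison down to two games and substitute the minimal feasible verification probabilities. As in the contractor discussion, (\gamethree,\modelone) is dominated by (\gamethree,\modeltwo) (identical master utility $\Bc-p_{\VRF}\Cv-nS$ but a smaller admissible lower bound on $p_{\VRF}$), leaving (\gamethree,\modeltwo) versus (\gamethree,\modelthree). For \modeltwo the utility $U_{W_i}=|W_i|S-\Ct\geq S-\Ct>0$ holds automatically, so $p_{\VRF}$ is constrained only by the equilibrium, giving (worst case $\min_i|W_i|=1$) $p_{\VRF}^{(\gamethrees,\modeltwo)}=\Ct/(S+\Cp)+\gamma^{(\gamethrees,\modeltwo)}$; for \modelthree the participation bound $p_{\VRF}|W_i|S>\Ct$ dominates, forcing $p_{\VRF}^{(\gamethrees,\modelthree)}=\Ct/S+\gamma^{(\gamethrees,\modelthree)}$. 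Inserting these into $U_M^{(\gamethrees,\modeltwo)}=\Bc-p_{\VRF}^{(\gamethrees,\modeltwo)}\Cv-nS$ and $U_M^{(\gamethrees,\modelthree)}=\Bc-p_{\VRF}^{(\gamethrees,\modelthree)}(\Cv+nS)$ produces the two explicit utility expressions.

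Then I would perform the crossover comparison. Discarding the arbitrarily small $\gamma$ terms, $U_M^{(\gamethrees,\modeltwo)}-U_M^{(\gamethrees,\modelthree)}$ reduces to $\Ct\Cv\,\Cp/\big(S(S+\Cp)\big)-n(S-\Ct)$, whose sign is controlled by comparing $\Cv$ with the threshold $nS(S/\Ct-1)(S/\Cp+1)$; as $\Cp$ grows this threshold falls to its limit $nS(S/\Ct-1)$, the quantity that separates the two cases. When $\Cv<nS(S/\Ct-1)$ (case (i)) game (\gamethree,\modelthree) is preferable for all admissible $\Cp$, and sending $\gamma^{(\gamethrees,\modelthree)}\to0$ yields $p_{\VRF}=\varepsilon+\Ct/S$. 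Otherwise the master adopts (\gamethree,\modeltwo), tuning $\Cp$ to meet the relation $\Cv<nS(S/\Ct-1)(S/\Cp+1)$ stated in the theorem and sending $\gamma^{(\gamethrees,\modeltwo)}\to0$ to obtain $p_{\VRF}=\varepsilon+\Ct/(S+\Cp)$. Combining the two cases gives the theorem.

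The step I expect to be the main obstacle is the crossover itself: I must pin down the exact orientation of the threshold relation in $\Cp$ that makes the chosen reward model optimal, and confirm that the vanishing $\gamma$ corrections never reverse the strict inequalities at the boundary, so that ``small enough $\gamma$'' is genuinely sufficient. Complementing this, the worst-case group-size reduction $\min_i|W_i|=1$ must be justified, since in all three reward models it is the smallest group that supplies the binding lower bound on $p_{\VRF}$.
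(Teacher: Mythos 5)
Your proposal is correct in substance and follows essentially the same route as the paper's own derivation: the same eliminations (game \gamefour because it needs $\Cv=0$; games (\gameone,\modelthree) and (\gametwo,\modelthree) by the participation requirement; games (\gameone,\modelone) and (\gametwo,\modelone) because tuning $\Cp$ cannot push $p_{\VRF}$ toward $1$ and so cannot control $\mathbf{P}_{wrong}$; and (\gamethree,\modelone) dominated by (\gamethree,\modeltwo)), the same minimal verification probabilities $p_{\VRF}^{(\gamethrees,\modeltwo)}=\Ct/(S+\Cp)+\gamma^{(\gamethrees,\modeltwo)}$ and $p_{\VRF}^{(\gamethrees,\modelthree)}=\Ct/S+\gamma^{(\gamethrees,\modelthree)}$ (the latter forced by participation, with the worst-case $\min_i|W_i|=1$ justified exactly as in the paper's footnote), and the same utility comparison whose threshold $nS(S/\Ct-1)(S/\Cp+1)$ has large-$\Cp$ limit $nS(S/\Ct-1)$.

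The one step you explicitly leave open --- pinning down the orientation of the threshold relation in case (ii) --- is in fact already settled by your own crossover computation, and resolving it exposes a discrepancy in the printed statement rather than a gap in your argument. Since $U_M^{(\gamethrees,\modeltwo)}-U_M^{(\gamethrees,\modelthree)}=\Ct\Cv\Cp/\bigl(S(S+\Cp)\bigr)-n(S-\Ct)$ up to the $\gamma$ terms, game (\gamethree,\modeltwo) strictly wins exactly when $\Cv>nS(S/\Ct-1)(S/\Cp+1)$, i.e., for large enough $\Cp$ --- which is precisely what the paper's in-text analysis asserts (``Otherwise, (\gamethree,\modeltwo) is better for small enough $\gamma^{(\gamethrees,\modeltwo)}$ and large enough $\Cp^{(\gamethrees,\modeltwo)}$''). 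Taken literally, the relation printed in case (ii) of the theorem is therefore direction-reversed: a $\Cp$ satisfying $\Cv<nS(S/\Ct-1)(S/\Cp+1)$ is one at which (\gamethree,\modelthree) is strictly better, so following the printed relation, as you do (``tuning $\Cp$ to meet the relation stated in the theorem''), would select a suboptimal reward model. The fix is to instruct the master to choose $\Cp$ large enough that $\Cv>nS(S/\Ct-1)(S/\Cp+1)$, which is possible in case (ii) precisely because $\Cv\geq nS(S/\Ct-1)$ is the infimum of the threshold over $\Cp$ (with the boundary case $\Cv=nS(S/\Ct-1)$ yielding only a tie in the limit). As for your other worry, the vanishing $\gamma$ corrections cannot reverse the comparison: away from the boundary the leading difference is a fixed nonzero constant, so ``small enough $\gamma$'' suffices, exactly as the paper claims.
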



\subsubsection{Tunable $S=\SW=\SM$ within the interval $(\Ct,\Bc)$} 
In this case
 $n$ is fixed, and given that we do not make any assumptions about its
 magnitude, we evaluate game~\gameone while evaluating game~\gametwo for an
 arbitrary $n$.
 Using calculus, the utility of the master for game (\gamethree,\modeltwo) is
 maximum when $S_{\max}^{(\gamethrees,\modeltwo)}=\pm\sqrt{\Cv\Ct/n}-\Cp$. Due to the
 aforementioned constraints, only values in the interval $(\Ct,\Bc)$
 are valid for $S$. Assuming then that $\Ct<S_{\max}^{(\gamethrees,\modeltwo)}<\Bc$,
 the utilities are
 
$U_M^{(\gamethrees,\modeltwo)}(S=S_{\max}^{(\gamethrees,\modeltwo)})= \Bc-2\sqrt{n\Cv\Ct}+n\Cp$ and

$U_M^{(\gamethrees,\modelthree)}= \Bc-\Ct\Cv/S^{(\gamethrees,\modelthree)}-n\Ct-\gamma^{(\gamethrees,\modelthree)}(\Cv+nS^{(\gamethrees,\modelthree)})$.

{\em Since $U_M^{(\gametwos,\modelone)}\leq \Bc$, game (\gamethree,\modeltwo) is better than game (\gametwo,\modelone)
whenever $n > 4\Cv\Ct/\Cp^2$.  On the other hand, game (\gamethree,\modelthree) is better
than game (\gamethree,\modeltwo) if $S^{(\gamethrees,\modelthree)}> \Ct\Cv/(2\sqrt{n\Cv\Ct}-n(\Cp+\Ct))$, for
small enough $\gamma^{(\gamethrees,\modelthree)}$.}
We highlight these observations in the following theorem.

\begin{theorem}
For any given sets of workers and payoff parameters, except for $S=\SM=\SW$ that is chosen by the master.
If the set of payoffs can be characterized as the contractor scenario,
in order to obtain the correct answer (with probability $1$) while maximizing the utility of the master, 
it is enough to exclude the master from the game,
reward the workers only when the result is verified to be correct,
set $S > \Ct\Cv/(2\sqrt{n\Cv\Ct}-n(\Cp+\Ct))$,
and verify with probability $p_{\VRF} = \varepsilon + \Ct/S$,
for any positive $\varepsilon$ arbitrarily close to~$0$.
\end{theorem}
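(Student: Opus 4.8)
The plan is to cut the twelve (game, model) pairs down to those that are simultaneously admissible for this scenario---meaning they can be driven to $\mathbf{P}_{wrong}=0$ while leaving every group a strictly positive utility so that rational workers agree to participate---and then, among the survivors, to express $U_M$ as a function of the tunable parameter $S=\SM=\SW$ (with $p_{\VRF}$ pinned at its least feasible value), maximize over $S$, and compare. Since the requirement is correctness with probability $1$, the condition $\mathbf{P}_{wrong}=0$ is the decisive filter.

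First I would prune the candidates. Game \gamefour is unavailable: its equilibrium is unique only when $\Cv=0$, whereas the contractor scenario assumes $\Cv>0$; this is exactly the force behind the theorem's ``exclude the master'' conclusion (game \gamethree). Games (\gameone,\modelthree) and (\gametwo,\modelthree) are discarded because they give $U_{W_i}=-p_{\VRF}\Cp<0$. Game (\gamethree,\modelone) is dominated by (\gamethree,\modeltwo): both carry the same utility $\Bc-p_{\VRF}\Cv-nS$, but the minimal admissible $p_{\VRF}$ is smaller in \modeltwo. Finally, in games \gameone and \gametwo under \modelone/\modeltwo the equilibrium fixes $p_{\CH}=\Cv/(S+\Cw)>0$, which no choice of $S$ can send to $0$; hence $\mathbf{P}_{wrong}=(1-p_{\VRF})\mathbf{P}_{\CH}>0$ and the probability-$1$ requirement removes them. (To confirm that this restriction costs no utility, I would also note $U_M^{(\gametwos,\modelone)}\le\Bc$ together with the growth of $\mathbf{P}_{\CH}$ with $n$ for $p_{\CH}>1/2$ from Claim~\ref{claim:pL}, which already makes (\gamethree,\modeltwo) preferable once $n>4\Cv\Ct/\Cp^2$.) Only (\gamethree,\modeltwo) and (\gamethree,\modelthree) survive, and both attain $\mathbf{P}_{wrong}=0$.

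The heart of the proof is the head-to-head comparison of these two, and the delicate point is that they have \emph{different} binding lower bounds on $p_{\VRF}$. In \modeltwo a worker is paid regardless of verification, so participation is automatic once $S>\Ct$ and the only active constraint is the equilibrium one, giving $p_{\VRF}^{(\gamethrees,\modeltwo)}=\Ct/(\Cp+S)+\gamma^{(\gamethrees,\modeltwo)}$; in \modelthree a worker is paid only when verified correct, so the participation requirement $p_{\VRF}|W_i|S>\Ct$ dominates, giving $p_{\VRF}^{(\gamethrees,\modelthree)}=\Ct/S+\gamma^{(\gamethrees,\modelthree)}$ in the worst case $\min_{W_i\in W}|W_i|=1$. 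Substituting into the tabulated utilities, $U_M^{(\gamethrees,\modeltwo)}=\Bc-\Ct\Cv/(\Cp+S)-nS-\gamma^{(\gamethrees,\modeltwo)}\Cv$ has an interior maximizer (zero the $S$-derivative) at $S_{\max}^{(\gamethrees,\modeltwo)}=\sqrt{\Cv\Ct/n}-\Cp$, with value $\Bc-2\sqrt{n\Cv\Ct}+n\Cp$; whereas $U_M^{(\gamethrees,\modelthree)}=\Bc-\Ct\Cv/S-n\Ct-\gamma^{(\gamethrees,\modelthree)}(\Cv+nS)$ increases with $S$ for small $\gamma^{(\gamethrees,\modelthree)}$, because the reward cost $p_{\VRF}\cdot nS=n\Ct$ is independent of $S$ and the remaining dependence $-\Ct\Cv/S$ is increasing. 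Hence the master pushes $S$ toward the top of the valid range $(\Ct,\Bc)$, and demanding $U_M^{(\gamethrees,\modelthree)}>U_M^{(\gamethrees,\modeltwo)}(S_{\max}^{(\gamethrees,\modeltwo)})$ with $\gamma^{(\gamethrees,\modelthree)}\to0$ reduces to $2\sqrt{n\Cv\Ct}-n(\Cp+\Ct)>\Ct\Cv/S$, i.e.\ the stated threshold $S>\Ct\Cv/(2\sqrt{n\Cv\Ct}-n(\Cp+\Ct))$.

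The main obstacle is not the calculus but the feasibility bookkeeping that makes the threshold meaningful. I would check that the interior-optimum hypothesis $\Ct<S_{\max}^{(\gamethrees,\modeltwo)}<\Bc$---which the analysis invokes---already forces $n<\Cv\Ct/(\Cp+\Ct)^2$, hence $2\sqrt{n\Cv\Ct}-n(\Cp+\Ct)>0$, so that the division preserves the inequality and the bound is a genuine positive lower limit on $S$. I would also confirm that the prescribed assignment is internally consistent: $p_{\VRF}=\varepsilon+\Ct/S\le1$ meets both the equilibrium and participation constraints, and $U_{W_i}=(|W_i|-1)\Ct+\varepsilon|W_i|S>0$ for every group. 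With these checks done, the chain ``only \gamethree yields $\mathbf{P}_{wrong}=0$, and among \gamethree games (\gamethree,\modelthree) beats (\gamethree,\modeltwo) above the threshold'' delivers the theorem.
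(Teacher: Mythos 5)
Your proposal is correct and follows essentially the same route as the paper: prune the candidate (game, model) pairs via the probability-$1$ and participation requirements, pin $p_{\VRF}$ at $\Ct/(\Cp+S)+\gamma$ and $\Ct/S+\gamma$ respectively, maximize $U_M^{(\gamethrees,\modeltwo)}$ by calculus at $S_{\max}^{(\gamethrees,\modeltwo)}=\sqrt{\Cv\Ct/n}-\Cp$ (value $\Bc-2\sqrt{n\Cv\Ct}+n\Cp$), and compare against $U_M^{(\gamethrees,\modelthree)}$ to obtain the threshold $S>\Ct\Cv/(2\sqrt{n\Cv\Ct}-n(\Cp+\Ct))$. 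Your added feasibility bookkeeping (positivity of the denominator from $\Ct<S_{\max}^{(\gamethrees,\modeltwo)}$, and $U_{W_i}=(|W_i|-1)\Ct+\varepsilon|W_i|S>0$ under the prescribed $p_{\VRF}$) is a sound refinement that the paper leaves implicit.
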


In order to show a scenario where game (\gametwo.\modelone) is better, we assume now
that $\Cw\geq 2\Cv$. Then, under this assumption, $p_{\CH}\leq 1/2$. The
following claim that makes use of this fact will be useful.

\begin{claim}
\label{claim:pL}
For game~\gametwo, \afa{ let $\mathbf{P}_{\CH}(n)$ denote} the probability that the majority
out of $n$ workers \afa{cheat. If} the probability that a worker
\afa{cheats is} $p_{\CH}\leq\frac{1}{2}$, then
$\mathbf{P}_{\CH}(n+2)\leq\mathbf{P}_{\CH}(n)$.
\end{claim}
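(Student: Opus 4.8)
The plan is to interpret $\mathbf{P}_{\CH}(n)$ as a binomial tail probability and to compare consecutive (odd) values by conditioning on the two extra workers. Since in game \gametwo each of the $n$ workers decides independently to cheat with the same probability $p_{\CH}$, writing $n=2k+1$ and letting $X\sim\mathrm{Bin}(n,p_{\CH})$, the majority cheats exactly when $X\geq k+1$; hence $\mathbf{P}_{\CH}(n)=\Pr[X\geq k+1]$. To pass from $n$ to $n+2=2k+3$ workers, I would keep the first $n$ workers (so the same $X$ governs them) and add two fresh independent $\mathrm{Bernoulli}(p_{\CH})$ workers $Y_1,Y_2$; the majority among the $n+2$ workers then cheats iff $X+Y_1+Y_2\geq k+2$.

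Next I would condition on $Y_1+Y_2\in\{0,1,2\}$, which occurs with probabilities $(1-p_{\CH})^2$, $2p_{\CH}(1-p_{\CH})$, and $p_{\CH}^2$, turning the target event into $X\geq k+2$, $X\geq k+1$, and $X\geq k$ respectively. Because these three probabilities sum to $1$, subtracting $\mathbf{P}_{\CH}(n)=\Pr[X\geq k+1]$ makes the middle contribution cancel and, using $\Pr[X\geq k+2]-\Pr[X\geq k+1]=-\Pr[X=k+1]$ and $\Pr[X\geq k]-\Pr[X\geq k+1]=\Pr[X=k]$, leaves only
\begin{align*}
\mathbf{P}_{\CH}(n+2)-\mathbf{P}_{\CH}(n)=-(1-p_{\CH})^2\Pr[X=k+1]+p_{\CH}^2\Pr[X=k].
\end{align*}

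Finally I would plug in the binomial point masses and exploit the symmetry $\binom{2k+1}{k+1}=\binom{2k+1}{k}$. After factoring out the common term $\binom{2k+1}{k}p_{\CH}^{k+1}(1-p_{\CH})^{k+1}$, the bracket collapses to $(2p_{\CH}-1)$, so that
\begin{align*}
\mathbf{P}_{\CH}(n+2)-\mathbf{P}_{\CH}(n)=\binom{2k+1}{k}p_{\CH}^{k+1}(1-p_{\CH})^{k+1}(2p_{\CH}-1).
\end{align*}
All factors except the last are non-negative, and the hypothesis $p_{\CH}\leq 1/2$ forces $2p_{\CH}-1\leq 0$, which yields the claim. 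The only real obstacle is bookkeeping: verifying that conditioning on $Y_1+Y_2$ shifts the thresholds correctly and that the telescoping leaves exactly the two point-mass terms; once the binomial-coefficient symmetry is invoked, the sign of the difference is immediate.
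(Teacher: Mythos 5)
Your proof is correct and follows essentially the same route as the paper's: both couple the $n$-worker and $(n+2)$-worker games by adjoining two fresh independent workers and conditioning on whether they cheat, which shifts the majority threshold by $0$, $1$, or $2$. Your write-up is in fact tighter than the paper's—the paper stops at ``Bounding $p_{\CH}$ the claim follows,'' and its displayed decomposition gives the event ``cheaters lead by more than one'' the coefficient $p_{\CH}^2+(1-p_{\CH})^2$ where it should be $1$ (a lead of at least $3$ cannot be overturned by two added workers), whereas your telescoping produces the exact difference $\binom{2k+1}{k}p_{\CH}^{k+1}(1-p_{\CH})^{k+1}(2p_{\CH}-1)$, from which the sign is immediate.
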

\begin{proof}
 Let $\mathbf{P}_{\CH}(n,>1)$ be the probability that, out of $n$
 workers, the number of cheaters exceed the number of honest workers
 by more than one (i.e., at least 3 given that we consider only odd
 number of workers), $\mathbf{P}_{\CH}(n,=1)$ by exactly one, and
 $\mathbf{P}_{\overline{{\CH}}}(n,=1)$ be the probability that the number
 of honest workers exceed the number of cheaters by exactly one.
Then,
$\mathbf{P}_{\CH}(n+2)=
\mathbf{P}_{\CH}(n,>1)(p_{\CH}^2+(1-p_{\CH})^2)
+\mathbf{P}_{\CH}(n,=1)(p_{\CH}^2+2p_{\CH}(1-p_{\CH}))
+\mathbf{P}_{\overline{{\CH}}}(n,=1)p_{\CH}^2$.
Bounding $p_{\CH}$ the claim follows. 
\end{proof}

From the previous claim, given that $\mathbf{P}_{\CH}=1/2$ for $p_{\CH}=1/2$,
we conclude that $\mathbf{P}_{\CH}\leq 1/2$. Using that $p_{\CH}\leq 1/2$,
$\mathbf{P}_{\CH}\leq 1/2$, and $\Cw>2\Cv$, the utility of the master for
game (\gametwo,\modelone) is

\begin{align*}
U_M^{(\gametwos,\modelone)} \geq &~\frac{1}{2}\Bc - p_{\VRF}^{(\gametwos,\modelone)}\Cv 
- \frac{1}{2}(1-p_{\VRF}^{(\gametwos,\modelone)})\Cw -nS^{(\gametwos,\modelone)}\\
= &~\frac{1}{2}\Bc - p_{\VRF}^{(\gametwos,\modelone)}\Cv -
\frac{1}{2}\Cw
+ \frac{1}{2}p_{\VRF}^{(\gametwos,\modelone)}\Cw -nS^{(\gametwos,\modelone)}\\
\geq &~\frac{1}{2}(\Bc - \Cw) -nS^{(\gametwos,\modelone)}.
\end{align*}

{\em As shown before, game (\gamethree,\modeltwo) is better than game (\gamethree,\modelthree) when
$$\Bc<\Ct\Cv/(2\sqrt{n\Cv\Ct}-n(\Cp+\Ct)).$$}  Comparing games (\gametwo,\modelone) and
(\gamethree,\modeltwo) when $\Ct<\sqrt{\Cv\Ct/n}-\Cp<\Bc$,
we have $(\Bc - \Cw)/2 -nS^{(\gametwos,\modelone)}\geq
\Bc-2\sqrt{n\Cv\Ct}+n\Cp$. {\em Therefore, game (\gametwo,\modelone) is better whenever}
\begin{align}
\Ct\leq S^{(\gametwos,\modelone)}\leq& 
2\sqrt{\frac{\Cv\Ct}{n}} 
-\frac{1}{2n}(\Bc +\Cw)-\Cp \label{ineq:S}
\end{align}
All three conditions are feasible simultaneously for big enough $\Cv$,
therefore there exists a scenario for which game (\gametwo,\modelone) is better.
Notice that under the aforementioned condition, for game (\gamethree,\modeltwo) to be
better, i.e., $n> 4\Cv\Ct/\Cp^2$, it must be true that $\Cp> 2\sqrt{\Cv\Ct/n}$ and the
inequality~(\ref{ineq:S}) does not hold.


\subsection{Computational Issues}
\label{section:complex}
In previous sections, a mechanism for the master to choose games, payoff models, and appropriate values of $p_\VRF$ for different scenarios was designed (based on Algorithm~\ref{alg1}). A natural question is what is the computational cost of using such mechanism. In addition to simple arithmetical calculations, there are two kinds of relevant computations required: binomial probabilities and verification of conditions for Nash equilibria. Both computations are $n$-th degree polynomial evaluations and can be carried out using any of the well-known numerical tools~\cite{horner} with polynomial asymptotic cost. These numerical methods yield only approximations, but all these calculations are performed either to decide in which case the parameters fit in, or to assign a value to $p_\VRF$, or to compare utilities. Given that these evaluations and assignments were obtained in the design as inequalities or restricted only to lower bounds, it is enough to choose the appropriate side of the approximation in each case.
Regarding the computational resources that the workers require to carry out these calculations, notice that the choice of $p_\VRF$ in the mechanism only yields a unique NE. Then, in order to make the computation feasible to the workers, the master sends together with the task a certificate proving such equilibrium. Such a certificate is the value of $p_\VRF$, payoff values, game, and payoff model, which is enough to verify uniqueness.

\section{Conclusions}
\label{sec:Conc}

In this paper we consider computational systems in which a master processor assigns tasks for execution to rational workers. We have defined the general model and cost-parameters, and we have proposed and analyzed several games that the master can choose to play in order to achieve high reliability at low cost. Based on our game analysis, we have designed appropriate algorithmic mechanisms for two realistic scenarios of these kinds of systems. 

\textcolor{black}{
While volunteer computing systems used in practice, like BOINC, use redundant task allocation to detect erroneous answers~\cite{boinc}, it is known that this technique can not guarantee correctness in presence of collusion~\cite{CCS}.
Interestingly, our results show that verifying with very small probability can be used to prevent erroneous answers from selfish workers, even under collusion.
A richer exploration of the practical implications of the results in this paper is left for future work. 
}

\textcolor{black}{In order to expand and generalize our model,} we plan to design more complex mechanisms where more than one parameter at a time is tunable by the master, and consider other realistic scenarios where our work can be applied.
It would also be interesting to consider the case where the workers and/or the master do not
have complete information of all the system parameters \textcolor{black}{(that is, consider Bayesian Mechanism Design, see e.g.~\cite{HR_STOC08})}. \textcolor{black}{Furthermore, we plan to consider
the more general problem in which there is a sequence of tasks whose values must be
reliably obtained. To this respect, cheater detection mechanisms, as the one considered
in~\cite{KSW_IPDPS08}, must be deployed and multiple-rounds protocols must be designed.}  



\end{document}